\def\ps@pprintTitle{%
 \let\@oddhead\@empty
 \let\@evenhead\@empty
 \def\@oddfoot{\centerline{\it{Preprint submitted to European Journal of Operational Research,  15 August 2018}}}%
 \let\@evenfoot\@oddfoot}
\newtheorem{remark}{Remark}
\newtheorem{thm}{Theorem}
\newtheorem{defi}{Definition}
\newtheorem{props}{Proposition}
\theoremstyle{THkey}}
\journal{European Journal of Operational Research}
\begin{document}

\begin{frontmatter}

\title{The impact of model risk on dynamic portfolio selection under multi-period mean-standard-deviation criterion}

\author[unsw]{Spiridon Penev\corref{cor1}}\ead{S.Penev@unsw.edu.au}
\author[mq]{Pavel V. Shevchenko}\ead{pavel.shevchenko@mq.edu.au}
\author[unsw]{Wei Wu}\ead{wei.wu1@unsw.edu.au}

\cortext[cor1]{Corresponding author}
\address[unsw]{School of Mathematics and Statistics, UNSW Sydney, NSW 2052 Australia}
\address[mq]{Department of Applied Finance and Actuarial Studies, Macquarie University, Sydney, NSW 2109 Australia}

\begin{abstract}
We quantify model risk of a financial portfolio whereby a multi-period mean-standard-deviation criterion is used as a selection criterion. In this work, model
risk is defined as the loss due to uncertainty of the underlying distribution of the returns of the assets in the portfolio. The uncertainty is measured by
the Kullback-Leibler divergence, i.e., the relative entropy. In the worst case scenario, the optimal robust strategy can be obtained in a semi-analytical form as
a solution of a system of nonlinear equations. Several numerical results are presented which allow us to compare the performance of this robust strategy with
the optimal non-robust strategy. For illustration, we also quantify the model risk associated with an empirical dataset.
\end{abstract}

\begin{keyword}
Multivariate statistics, Uncertainty modelling, Robust portfolio allocation, Pseudo dynamic programming, Mean-standard-deviation, Kullback-Leibler divergence
\end{keyword}
\end{frontmatter}

\section{Introduction.} \label{intro}
Portfolio selection has been studied extensively over the last few decades (see e.g., \cite{Mar52, LiN00}). Investors face the problem of choosing
the best possible investment strategy among thousands of assets. One significant difficulty in choosing optimal strategies is magnified by the fact that the
essential information needed to make an optimal decision, namely, the distribution of assets, is typically unknown or only vaguely known. Another difficulty is
that the distribution changes over time and a dynamic approach is needed to model it. A typical case  is when  the
cross-sectional distribution of the assets in the portfolio is assumed to be ``slightly deviating" from some nominal multivariate  distribution. The deviation
can be measured by a divergence measure such as the Kullback-Leibler (KL) divergence, i.e., the relative entropy, or more generally by the $\alpha$-divergence.
Intuitively, the bigger the divergence, the more significant  the impact on the optimal investment decision that is calculated under the nominal  distribution
assumption. However, the magnitude of the divergence that significantly impacts  the investment decision depends on the nominal distribution and on the type of
deviation. When the distributional assumptions are violated but only ``slightly", it may be prudent to use the optimal investment strategy under the nominal
distribution. This may in fact deliver better results (for example, in the sense of a higher expectation of the terminal wealth) since the robust approach--focusing on safeguarding against the worst possible outcome--may deliver too pessimistic a strategy that  may be disadvantageous when the nominal distribution
is violated only slightly. Ideally,  a ball of certain radius $\eta_0$ around the nominal distribution  is  given such that for
 distributions within the  radius the nominal distribution  is recommended, whereas when the model uncertainty
is bigger than $\eta_0$ then the robust approach is recommended. It is worth noting that although the optimal investment strategy under the nominal
distribution may perform better in the sense of a higher expectation of the terminal wealth,  the model risk as defined from the standard risk management
perspective may also be large. Thus, it is essential to quantify such model risk. \\

Unlike in robust optimization, where uncertainty is often measured by an uncertainty set (see e.g., \cite{KapCR14,KimKKF14}), the deviation between
distributions from a statistical point of view has been most commonly measured by a divergence measure. \cite{GlaX14} interpreted the KL divergence as a measure
of the amount of extra information required to adopt an alternative distribution, and disregard the nominal distribution. However, it has also been pointed out in
\cite{GlaX14} that the KL divergence is not suitable for heavy tailed distributions since  it relies on the assumption that the moment
generating function of the underlying random variables must exist in some open set containing the origin. Thus, the so-called $\alpha$-divergence is used as a
substitute. In contrast, \cite{SchS15} argued that the use of $\alpha$-divergence implicitly assumes that the tail of the deviating model is not heavier than that
of the nominal distribution. In fact, the popularity of the KL divergence and of the $\alpha$-divergence is due to the existence of a closed form solution when one
considers a worst case scenario approach or, in other words, the robust optimization approach to quantify model risk in risk management (see \cite{GlaX13},
\cite{SchS15}). If one only considers the measuring of model risk, alternative divergence measures are also possible. This has been discussed in the recent work of
\cite{BreC16} and in the references therein. Another interesting result worth mentioning in this area is the recent work by \cite{Lam16}, deriving an
asymptotic expansion of the worst  risk measure in the case of KL divergence. \\

The main  focus  in this paper is to investigate the impact of uncertainty of the distribution of returns of assets on the optimal portfolio allocation
model in \cite{BanGPW16}. As in \cite{BanGPW16}, the  selection criterion is the multi-period mean-standard-deviation and portfolio selection is performed in a dynamic
way. To measure uncertainty we use the KL divergence, which  is  reasonable if the underlying random variable (a function of asset returns) is not heavy tailed.
If we consider short term re-balancing (daily or weekly) as we do in our numerical examples, this is a reasonable and acceptable assumption.  Inspired
by \cite{KanF06}, we  find what we call a time consistent optimal robust strategy (see Definition \ref{defi:timeconsistency}). This reduces to solving a sequence
of single period portfolio selection problems. For each single period, we apply a robust optimization approach. Thus, we have to solve an inner and  an outer optimization
problem. The inner  problem is an infinite dimensional optimization  where we try to find a worst case distribution from a set of alternative distributions
(which have positive distances to the nominal distribution). A closed form solution to the inner optimization problem is available from past literature, see for example
\cite{Lam16}. The outer optimization problem is a standard convex optimization problem. By solving this, we derive a system of equations  which an optimal robust
strategy should satisfy. This is our first contribution. To be more precise, we have derived an optimal strategy in a semi-analytical form for the portfolio
selection model in \cite{BanGPW16} but with added  uncertainty of the distribution of the returns, where the uncertainty is measured by the KL divergence.
This optimal robust strategy can easily be calculated numerically in combination
with a simple Monte Carlo approach from \cite{GlaX14}. Our second contribution is to examine the impact of the uncertainty on portfolio selection by using the constructed
model.  Additionally, we compare  the performance of the optimal robust strategy and of the non-robust strategy (the optimal strategy when there is no distributional uncertainty)
under various scenarios. Moreover, we define model risk from the standard risk management perspective and quantify model risk using an empirical dataset. This provides a way
to examine model risk for practical purposes and is yet another contribution of our work.  \\

The study of the impact of uncertainty of the underlying distribution of  asset returns on the optimal strategy is an important one. In fact, the impact on the optimal
strategy ``under the worst case" was also raised as one of the five questions in the implementation of a robust risk management process by \cite{SchS15}. Their work focused
on the remaining four questions. Although there are several works devoted to the topic of  this   paper, there are some essential differences  to our work. \cite{Cal07} designed algorithms to solve mean-variance and mean-absolute-deviation static portfolio allocation under uncertainty. In contrast,
\cite{GlaX14} derived an analytical (or semi-analytical) solution for a static portfolio allocation problem under model uncertainty in which the mean-variance selection criterion
is used.
\cite{GlaX13} also explored a dynamic setting using a factor model.  However, their paper explicitly exploits an assumption of multivariate normality for the return vector. In our paper, no such assumption is required.
  \\

The paper is organized as follows. In Section 2, we define and formulate the problem of interest. In Section 3, we obtain the optimal robust strategy in a
semi-analytical form. Some discussions regarding the model and its computation are presented in Section 4. Section 5 is devoted  to   numerical examples and
discussions about quantifying model risk. We conclude the paper in Section 6.

\section{Problem Formulation.} \label{sec:probform}
We consider a market of $d > 1$ risky assets in which a risk free asset is not available. Suppose that an investor wants to invest all of  their money over
a fixed investment horizon $[0,N]$ into these $d$ risky assets. The return of each asset over the $n$th period $[n, n+1]$, $n = 0, ..., N-1$, is denoted
as $\bm{r}_{n+1} = (r^{1}_{n+1}, ..., r^{d}_{n+1})^\intercal$, where $r^{i}_{n+1}$, $i = 1, ...,d$  represents  the return of the $i$th asset over the $n$th
period. We assume that all random quantities are defined on a filtered complete probability space ($\Omega, \mathcal{F}, (\mathcal{F}_{n}), \mathbb{P}$)
with the sample space $\Omega$, the sigma-algebra $\mathcal{F}$, filtration $(\mathcal{F}_{n})$, and the probability measure $\mathbb{P}$, where the
sigma-algebra $\mathcal{F}_{n} = \sigma(\bm{r}_{m}, 1 \leq m \leq n )$ and $\mathcal{F}_{0}$ is trivial. Moreover, the return vector $\bm{r}_{n+1}$ has
finite second moments in the $L^{2}$ sense, i.e., $\mathbb{E}(|\bm{r}_{n+1}|^2) < \infty$.  \\

At each time $n = 0, ..., N-1$, the investor re-balances the portfolio using a re-balancing strategy $\bm{u}= (\bm{u}_{0}, ... ,\bm{u}_{N-1})^\intercal$,
where $u_{n}^{i} \in \mathbb{R}$, $i = 1, ..., d$ denotes the proportional allocation of the wealth of the investor into the $i$th asset. We
denote by $\mathcal U^{0}$ the set of admissible strategies at time $0$ such that all $\bm{u}_{n}$, where $n = 0, ..., N-1$, take values in the set
\small
\begin{eqnarray}
  U = \Big\{\bm{u} \in \mathbb{R}^{d}: \bm{1}^\intercal\bm{u} = 1\Big\}. \nonumber
\end{eqnarray}
\normalsize
For $m > 0$, we use $\mathcal{U}^{m}$ to denote the set of admissible sub-strategies $\bm{u}^{m} = (\bm{u}_{n})_{n \geq m}$ such that all $\bm{u}_{n}, n = m,
..., N-1$, take values in the set $U$. \\

Let $W_{n}$ denote the wealth of the investor at time $n$, where $n = 0, ..., N$. We assume that $W_{n}$ and $\bm{r}_{n+1}$ are independent. This can be achieved,
for example, by assuming that $\bm{r}_{n}$ are independent, identically distributed. During the period $[n, n+1]$, the investor's wealth changes to
\small
\begin{eqnarray}
  W_{n+1} &   =   &  W_{n}(\bold{1}+\bm{r}_{n+1})^\intercal\bm{u}_{n} = W_{n}\bm{R}_{n+1}^\intercal\bm{u}_{n},     \nonumber
\end{eqnarray}
\normalsize
where $\bm{R}_{n+1} = \bold{1}+\bm{r}_{n+1}$. For $x \in \mathbb{R}$, at any time $m$, the aim of the investor is to optimize
\small
\begin{eqnarray}\label{eqn:mstdobj}
    J_{m,x}(\bm{u}^{m}) &  =  &    \mathbb{E}\Big(\sum_{n=m}^{N-1}\mathcal{J}_{n,W_{n}}(W_{n+1}) | W_{m} = x\Big),
\end{eqnarray}
\normalsize
where
\small
\begin{eqnarray}
  \mathcal{J}_{n,W_{n}}(W_{n+1})  &   =   &    W_{n+1} - \kappa_{n}\sqrt{Var_{n,W_{n}}(W_{n+1})}     \nonumber \\
                                  &   =   &    W_{n}\Big(\bm{R}_{n+1}^\intercal\bm{u}_{n}  - \kappa_{n}\sqrt{\bm{u}^\intercal_{n}\bm{\Sigma}_{n}\bm{u}_{n}}\Big), \nonumber \\
        Var_{n,W_{n}}(W_{n+1})    &   =   &    Var(W_{n+1} | W_{n}),    \ \ \  \textnormal{and}    \ \ \
          \bm{\Sigma}_{n}             =        Var(\bm{r}_{n+1}), \nonumber
\end{eqnarray}
\normalsize
where $Var(\bm{r}_{n+1})$ is the variance of $\bm{r}_{n+1}$, and the parameter $\kappa_{n}$ characterizes the risk aversion of the investor. The above criterion is a
multi-period selection criterion of mean-standard-deviation (MSD) type. We note that because of the scaling property of the single period mean-standard-deviation criterion,
the optimization of the intermediate wealth contributes directly to the optimization of the terminal wealth. For more properties and discussions of this objective we refer
to \cite{BanGPW16}.

\begin{remark}
  One may note that we do not include any discounting here to reflect the time value of money. This is because we assume a market of risky assets only. Since no risk free
  asset is available, disregarding discounting would be appropriate.
\end{remark}

The value function of this control problem takes the form
\small
\begin{eqnarray}\label{eqn:nonrobust}
  \mathcal{V}(m,x) = \sup_{\bm{u}^{m} \in \mathcal{U}^{m}}J_{m,x}(\bm{u}^{m}).
\end{eqnarray}
\normalsize

We will use the Kullback-Leibler (KL) divergence

\small
\begin{eqnarray*}
 \mathcal{R}(\mathcal{E}) &   =   &   \mathbb{E}\Big(\mathcal{E}\log\mathcal{E}\Big),
\end{eqnarray*}
\normalsize
where $\mathcal{E}$ is the ratio of the density of an alternative distribution to the density of the nominal distribution, as a deviation measure between different
distributions. For the reader's convenience we recall the concept of the KL divergence in Appendix A. Now, for a given $\eta > 0$, a KL
divergence ball is defined as
\small
\begin{eqnarray}\label{eqn:klball}
  \mathcal{B}_{\eta} = \{\mathcal{E}: \mathcal{R}(\mathcal{E}) \leq \eta\}.
\end{eqnarray}
\normalsize

Next, to quantify the model risk for the investor, we formulate a robust version of the problem in (\ref{eqn:nonrobust}). We first define a sequence of KL divergence
balls
\small
\begin{eqnarray}
  \mathcal{B}_{\eta_{n}} = \{\mathcal{E}: \mathcal{R}(\mathcal{E}) \leq \eta_{n}\}, \ \ \textnormal{where} \ n = 0, ..., N-1.     \nonumber
\end{eqnarray}
\normalsize
One may note that all moments through this paper are defined with respect to the nominal distribution. \\

Given any starting time $m = 0, ..., N-1$, we denote the set of $\bm{\mathcal{E}}^{m} = (\mathcal{E}_{m}, ..., \mathcal{E}_{N-1})$ such that each $\mathcal{E}_{n} \in
\mathcal{B}_{\eta_{n}}$, where $n = m, ..., N-1$, by $\mathcal{B}^{m}$. Here, $\mathcal{E}_{n}$, $n = m, ..., N-1$, is the ratio of the density of an alternative
distribution to the density of the nominal distribution over $[n, n+1]$. The robust version of (\ref{eqn:nonrobust}) is defined then by
\small
\begin{eqnarray}\label{eqn:robust}
   V(m,x)    &  = &    \displaystyle\sup_{\bm{u}^{m} \in \mathcal{U}^{m}}\inf_{\bm{\mathcal{E}}^{m} \in \mathcal{B}^{m}} J_{m,x}(\bm{\mathcal{E}}^{m},\bm{u}^{m}),
\end{eqnarray}
\normalsize
where
\small
\begin{eqnarray}
  J_{m,x}(\bm{\mathcal{E}}^{m},\bm{u}^{m})  &  =  &    \mathbb{E}\Bigg(\mathcal{E}_{m}W_{m}\Big(\bm{R}_{m+1}^\intercal\bm{u}_{m}
                                                     - \kappa_{m}\sqrt{\bm{u}^\intercal_{m}\bm{\Sigma}_{m}\bm{u}_{m}}\Big)       \nonumber \\
                                            &     &  + \sum_{n=m+1}^{N-1}e^{-\eta_{n}c_{n}\kappa_{n}}\mathcal{E}_{n}\mathcal{J}_{n,W_{n}}(W_{n+1}) | W_{m} = x\Bigg),  \nonumber
\end{eqnarray}
\normalsize
where $c_{n}$ are scaling parameters. Note that  if $m \geq N-1$, the summation term is set to zero.

\begin{remark}
  Here we take the infimum over the set of all possible distributions within a KL ball. This corresponds to the worst case scenario. We then find the best strategy
  under the worst case scenario. The robustification process has an interpretation from game theory. For readers interested in finding more about this interpretation,
  we refer to \cite{GlaX14}.
\end{remark}

\begin{remark}
  At time $n$, we scale the future payoff by a factor $e^{-\eta_{n}c_{n}\kappa_{n}}$. The term is added for mathematical convenience and guarantees the existence of
  an optimal solution. Indeed, we reduce the  impact of the future uncertainty on the current stage according to the investor's risk aversion and the radius of the
  divergence ball in the following period.
  It is worth noting that as the radius of
  the divergence ball $\eta_{n} \rightarrow 0$, we return to the non-robust case.
\end{remark}

\section{Semi-Analytical Optimal Solution under KL Divergence.} \label{sec:semianalyopsol}
To solve the robust control problem in (\ref{eqn:robust}), we  apply a strategy  that we call a strongly time consistent optimal robust strategy. It represents a
robustified version of a strong time consistent optimal strategy  inspired by \cite{KanF06} (see also \cite[definition 2]{BanGPW16}). The exact definition is
given below.

\begin{defi}\label{defi:timeconsistency}
   Given any starting time $m = 0, ..., N-1$, a strategy $\bm{u}^{m,\ast} = (\bm{u}^{\ast}_{m}, ..., \bm{u}^{\ast}_{N-1})$ is said
   to be a strongly time consistent optimal robust strategy with respect to $J_{m,x}(\bm{\mathcal{E}}^{m},\bm{u}^{m})$ if it satisfies the
   following two conditions.
   \begin{itemize}
     \item \textbf{Condition 1}: Let $\mathcal{A}^{m} \subset \mathcal{U}^{m}$ be a set of strategies of the form $\bm{u}^{m} = (\bm{v},
                                 \bm{u}^{\ast}_{m+1}, ..., \bm{u}^{\ast}_{N-1})$, where $\bm{v} \in \mathbb{R}^{d}$ is arbitrary.
                                 Then there exists $\bm{\mathcal{E}}^{m,\ast}(\cdot) = (\bm{\mathcal{E}}_{m}^{\ast}(\cdot),
                                 \bm{\mathcal{E}}_{m}^{\ast}(\cdot), ..., \bm{\mathcal{E}}_{N-1}^{\ast}(\cdot)) \in \mathcal{B}^{m}$
                                 such that
                                 \small
                                 \begin{eqnarray}
                                             \inf_{\bm{\mathcal{E}}^{m} \in \mathcal{B}^{m}} J_{m,x}(\bm{\mathcal{E}}^{m},\cdot)
                                    &  =  &  J_{m,x}(\bm{\mathcal{E}}^{m,\ast}(\cdot),\cdot). \\
                                             \sup_{\bm{u}^{m} \in \mathcal{A}^{m}}J_{m,x}(\bm{\mathcal{E}}^{m,\ast}(\bm{u}^{m}),\bm{u}^{m})
                                    &  =  &  J_{m,x}(\bm{\mathcal{E}}^{m,\ast}(\bm{u}^{m,\ast}),\bm{u}^{m,\ast}).
                                 \end{eqnarray}
                                 \normalsize
     \item \textbf{Condition 2}: For $n = m+1, ..., N-1$, $\bm{u}^{n} = (\bm{u}_{n}, \bm{u}_{n+1}, ..., \bm{u}_{N-1}) \in \mathcal{U}^{n}$,
                                 there exists $\bm{\mathcal{E}}^{n,\ast}(\cdot) = (\bm{\mathcal{E}}_{n}^{\ast}(\cdot),
                                 \bm{\mathcal{E}}_{n+1}^{\ast}(\cdot) ..., \bm{\mathcal{E}}_{N-1}^{\ast}(\cdot))$ \mbox{$\in \mathcal{B}^{n}$}
                                 such that \small
                                 \begin{eqnarray}
                                            \inf_{\bm{\mathcal{E}}^{n} \in \mathcal{B}^{n}} J_{n,x}(\bm{\mathcal{E}_{n}},\cdot)
                                   &  =  &  J_{n,x}(\bm{\mathcal{E}}^{n,\ast}(\cdot), \cdot).   \\
                                            \sup_{\bm{u}^{n} \in \mathcal{U}^{n}}J_{n,x}(\bm{\mathcal{E}}^{n,\ast}(\bm{u}^{n}), \bm{u}^{n})
                                   &  =  &  J_{n,x}(\bm{\mathcal{E}}^{n,\ast}(\bm{u}^{n,\ast}), \bm{u}^{n,\ast}).
                                 \end{eqnarray}
                                 \normalsize
   \end{itemize}
   If only \textbf{Condition 1} is satisfied then we say that the strategy is a weakly time consistent optimal robust strategy with respect to $J_{m,x}(\cdot)$.
\end{defi}

\begin{remark}
 The time consistency that we defined here refers to the time consistency of a strategy with respect to the particular criterion that we choose. There are other
 definitions of time consistency such as the time consistency of a selection criterion itself (see \cite[definition 2]{CheLG13}).
\end{remark}

Since the value function of the robust control problem given by (\ref{eqn:robust}) is separable (in the sense that it can be written as a sum of expectations), by a
similar argument as in the proof of \cite[theorem 3]{CheLG13} we know that a weakly time consistent optimal strategy, which can be found by period-wise
optimization, is also a strongly time consistent optimal strategy. \\

The rest of this section is devoted to the following theorem and its proof. This theorem summarizes one of our main findings, i.e., a system of nonlinear equations that
an optimal strategy  should  satisfy.

\begin{thm}\label{thm:opstrategysys}
  Suppose that $(\bm{u}^{\ast}_{m})$, $m = 0, 1, ..., N-1$ is a strategy where there exists a sequence $(\theta_{m}^{\ast})$, with $\theta_{m}^{\ast} > 0$, such that
  \small
  \begin{eqnarray}
     \mathbb{E}\Big(\exp\big(-\bm{R}^\intercal_{m+1}\bm{u}^{\ast}_{m}\frac{2}{\theta_{m}^{\ast}}\big)\Big) < \infty,  \nonumber
  \end{eqnarray}
  \normalsize
  and
  \small
  \begin{eqnarray}
   \bm{u}^{\ast}_{m}                   &   =   &    \frac{S_{m}^{\ast}}{\kappa_{m}}\Big(\bm{\Sigma}_{m}^{-1}\bm{X}^{\ast}_{m}
                                                  - \frac{b_{m}^{\ast}\bm{\Sigma}_{m}^{-1}\bm{1}}{a_{m}}\Big)
                                                  + \frac{\bm{\Sigma}_{m}^{-1}\bm{1}}{a_{m}}, \label{eqn:opsolsys1} \\
         S^{\ast}_{m}                  &   =   &    \sqrt{\frac{\frac{1}{a_{m}}}{1 - \frac{h_{m}^{\ast}}{\kappa_{m}^{2}}
                                                  + \frac{(b_{m}^{\ast})^2}{\kappa_{m}^{2} a_{m}}}}
                                           =        \sqrt{\frac{\frac{1}{a_{m}}}{1 - \frac{1}{\kappa_{m}^{2}}g_{m}^{\ast}}},  \label{eqn:opsolsys2} \\
      \bm{X}^{\ast}_{m}                &   =   &    \frac{\mathbb{E}\Big(\exp\big(-\bm{R}_{m+1}^\intercal\bm{u}_{m}^{\ast}\frac{1}{\theta_{m}^{\ast}}\big)\bm{R}_{m+1}\Big)}
                                                    {\mathbb{E}\Big(\exp\big(-\bm{R}_{m+1}^\intercal\bm{u}_{m}^{\ast}\frac{1}{\theta_{m}^{\ast}}\big)\Big)}
                                                    + e^{-\eta_{m+1}c_{m+1}\kappa_{m+1}} \times \nonumber \\
                                       &       &    G_{m+1}(\bm{u}_{m+1}^{\ast}, \theta_{m+1}^{\ast})\mathbb{E}(\bm{R}_{m+1}),
                                                    \ \ \ \  \label{eqn:opsolsys3} \\
  \mathbb{E}\big(\mathcal{E}_{m}^{\ast}\log(\mathcal{E}_{m}^{\ast})\big)   &    =   &   \eta_{m},  \label{eqn:opsolsys4}
 \end{eqnarray}
 \normalsize
 where
 \small
  \begin{eqnarray}
    &     &  g_{m}^{\ast} = h_{m}^{\ast} - \frac{(b^{\ast}_{m})^{2}}{a_{m}}, \ \ \ h_{m}^{\ast} = (\bm{X}_{m}^{\ast})^{T}\bm{\Sigma}_{m}^{-1}\bm{X}_{m}^{\ast},
             \ \ \ a_{m} = \bm{1}^\intercal\bm{\Sigma}_{m}^{-1}\bm{1}, \nonumber \\
    &     &  b_{m}^{\ast} = \bm{1}^\intercal\bm{\Sigma}_{m}^{-1}\bm{X}_{m}^{\ast}, \ \ \
             \mathcal{E}^{\ast}_{m} = \displaystyle\frac{\exp\big(-\bm{R}_{m+1}^\intercal\bm{u}_{m}^{\ast}\frac{1}{\theta_{m}^{\ast}}\big)}
             {\mathbb{E}\Big(\exp\big(-\bm{R}_{m+1}^\intercal\bm{u}_{m}^{\ast}\frac{1}{\theta_{m}^{\ast}}\big)\Big)} \ \ \ \mathbb{P}\textnormal{-a.s.},    \nonumber \\
   &     &   G_{m}(\bm{u}_{m}^{\ast}, \theta_{m}^{\ast})                \nonumber \\
   &  =  & - \theta^{\ast}_{m}\log\mathbb{E}\Big(\exp\big(-\bm{R}_{m+1}^\intercal\bm{u}_{m}^{\ast}\frac{1}{\theta_{m}^{\ast}}\big)\Big)
           + e^{-\eta_{m+1}c_{m+1}\kappa_{m+1}} \times \nonumber  \\
   &     &   G_{m+1}(\bm{u}_{m+1}^{\ast}, \theta_{m+1}^{\ast}) \mathbb{E}\big(\bm{R}_{m+1}^\intercal\bm{u}_{m}^{\ast}\big) - \kappa_{m}S_{m}^{\ast} - \eta_{m}\theta_{m}^{\ast},   \nonumber
 \end{eqnarray}
 \begin{eqnarray}
             G_{N}(\bm{u}_{N}^{\ast}, \theta_{N}^{\ast})
   &  =  &   0.  \nonumber
 \end{eqnarray}
 \normalsize
 Then, $(\bm{u}^{\ast}_{m})$  is optimal, and the value function is given by
 \small
 \begin{eqnarray}
    V(m,x) = xG_{m}(\bm{u}_{m}^{\ast}, \theta_{m}^{\ast}),   \nonumber
 \end{eqnarray}
 \normalsize
 where $x \in (0,\infty)$.
\end{thm}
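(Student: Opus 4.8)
The plan is to prove the theorem as a verification result by backward induction on $m$, exploiting the separability of the objective in (\ref{eqn:robust}) together with the time-consistency reduction recorded just before Definition \ref{defi:timeconsistency}: since a weakly time consistent optimal robust strategy obtained by period-wise optimization is automatically strongly time consistent, it suffices to solve a single-period sup--inf problem at each stage and paste the solutions together. I would posit the inductive ansatz $V(m+1,x)=xG_{m+1}(\bm{u}^{\ast}_{m+1},\theta^{\ast}_{m+1})$ with terminal value $G_{N}=0$ (the empty sum when $m=N-1$). Because $W_{m+1}=x\bm{R}_{m+1}^{\intercal}\bm{u}_{m}$ and the penalty $\kappa_{m}\sqrt{\bm{u}_{m}^{\intercal}\bm{\Sigma}_{m}\bm{u}_{m}}$ is deterministic under the nominal moments, the dynamic programming step collapses, after dividing by $x$, to
\begin{eqnarray}
G_{m} = \sup_{\bm{u}_{m}\in U}\Big\{\inf_{\mathcal{E}_{m}\in\mathcal{B}_{\eta_{m}}}\mathbb{E}\big(\mathcal{E}_{m}\bm{R}_{m+1}^{\intercal}\bm{u}_{m}\big) - \kappa_{m}\sqrt{\bm{u}_{m}^{\intercal}\bm{\Sigma}_{m}\bm{u}_{m}} + e^{-\eta_{m+1}c_{m+1}\kappa_{m+1}}G_{m+1}\mathbb{E}\big(\bm{R}_{m+1}^{\intercal}\bm{u}_{m}\big)\Big\}; \nonumber
\end{eqnarray}
note the continuation carries the nominal mean $\mathbb{E}(\bm{R}_{m+1})$ because, in (\ref{eqn:robust}), $\mathcal{E}_{m}$ multiplies only the period-$m$ payoff.

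Next I would dispatch the inner infimum. For fixed $\bm{u}_{m}$ this is the classical worst-case-expectation problem over a KL ball, whose Lagrangian dual (the Gibbs/Donsker--Varadhan variational identity, as in \cite{Lam16}) gives
\begin{eqnarray}
\inf_{\mathcal{E}_{m}\in\mathcal{B}_{\eta_{m}}}\mathbb{E}\big(\mathcal{E}_{m}\bm{R}_{m+1}^{\intercal}\bm{u}_{m}\big) = \sup_{\theta_{m}>0}\Big\{-\theta_{m}\log\mathbb{E}\big(e^{-\bm{R}_{m+1}^{\intercal}\bm{u}_{m}/\theta_{m}}\big)-\eta_{m}\theta_{m}\Big\}, \nonumber
\end{eqnarray}
with minimizer the exponential tilt $\mathcal{E}^{\ast}_{m}\propto\exp(-\bm{R}_{m+1}^{\intercal}\bm{u}_{m}/\theta^{\ast}_{m})$, exactly the stated form, and $\theta^{\ast}_{m}$ pinned down by activeness of the constraint $\mathbb{E}(\mathcal{E}^{\ast}_{m}\log\mathcal{E}^{\ast}_{m})=\eta_{m}$, which is (\ref{eqn:opsolsys4}). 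The finiteness hypothesis $\mathbb{E}(\exp(-2\bm{R}_{m+1}^{\intercal}\bm{u}^{\ast}_{m}/\theta^{\ast}_{m}))<\infty$ is what I would invoke, via Cauchy--Schwarz against the $L^{2}$ assumption on $\bm{r}_{m+1}$, to guarantee that the tilted measure is well defined and that the relevant exponential moments and their $\bm{u}$-derivatives exist.

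For the outer supremum I would observe that, after the inner minimization, the bracketed objective is concave in $\bm{u}_{m}$: the robust mean $-\theta^{\ast}_{m}\log\mathbb{E}(e^{-\bm{R}_{m+1}^{\intercal}\bm{u}_{m}/\theta^{\ast}_{m}})-\eta_{m}\theta^{\ast}_{m}$ is an infimum of affine functions of $\bm{u}_{m}$, the standard-deviation term is concave, and the continuation is linear, so the first-order conditions are sufficient. Forming the Lagrangian for the budget constraint $\bm{1}^{\intercal}\bm{u}_{m}=1$ with multiplier $\lambda$ and differentiating---using the envelope theorem to kill the dependence of $\theta^{\ast}_{m}$ on $\bm{u}_{m}$---yields
\begin{eqnarray}
\bm{X}^{\ast}_{m} - \kappa_{m}\frac{\bm{\Sigma}_{m}\bm{u}_{m}}{S^{\ast}_{m}}=\lambda\bm{1}, \nonumber
\end{eqnarray}
where $\bm{X}^{\ast}_{m}$ of (\ref{eqn:opsolsys3}) is precisely the tilted mean of $\bm{R}_{m+1}$ plus the discounted continuation gradient. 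Solving for $\bm{u}_{m}$, imposing the budget constraint to eliminate $\lambda$, and substituting into $S^{\ast}_{m}=\sqrt{\bm{u}_{m}^{\intercal}\bm{\Sigma}_{m}\bm{u}_{m}}$ reproduces (\ref{eqn:opsolsys1}); since the cross term $(\bm{\Sigma}_{m}^{-1}\bm{X}^{\ast}_{m}-\tfrac{b^{\ast}_{m}}{a_{m}}\bm{\Sigma}_{m}^{-1}\bm{1})^{\intercal}\bm{1}$ vanishes, the self-consistent identity (\ref{eqn:opsolsys2}) follows.

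Finally, substituting the optimal $\bm{u}^{\ast}_{m}$ and $\theta^{\ast}_{m}$ back into the bracket identifies $G_{m}$ with the stated recursion and hence $V(m,x)=xG_{m}$, closing the induction. The step I expect to be the genuine obstacle is not the two optimizations, which are standard, but the rigorous justification of the dynamic programming decomposition itself: one must verify that the sup--inf over the product ball $\mathcal{B}^{m}$ factorizes period by period---so that the worst-case density splits as $\mathcal{E}^{\ast}_{m}$ times a continuation optimized independently---and that the measurable-selection issues implicit in Definition \ref{defi:timeconsistency} are controlled. Establishing concavity uniformly enough to interchange differentiation and expectation, and checking that $\theta^{\ast}_{m}>0$ with an active KL constraint, are the supporting technical points where the moment condition does the real work.
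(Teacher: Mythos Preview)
Your proposal is correct and follows essentially the same route as the paper: backward induction on $m$, solving the inner KL-ball problem via the exponential tilt and its Lagrange/dual characterization (the paper cites \cite{BenTC88} and \cite{Lam16} here), then the outer problem via strict concavity and first-order conditions under the budget constraint, yielding (\ref{eqn:opsolsys1})--(\ref{eqn:opsolsys4}) and $V(m,x)=xG_m$. The only cosmetic differences are that the paper writes out the cases $m=N-1$ and $m=N-2$ explicitly before the generic step, packages the concavity of the outer objective as a separate Lemma B.1 proved via H\"older (rather than your ``infimum of affine functions'' argument), and outsources the algebra behind (\ref{eqn:opsolsys1})--(\ref{eqn:opsolsys2}) to \cite[theorem 4]{BanGPW16}; the paper also does not dwell on the factorization/measurable-selection worries you flag, relying instead on the separability remark preceding the theorem.
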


\begin{proof}
 We proceed by using an induction argument. \\

 \textbf{Step 1}: Firstly, let $m = N-1.$ The optimization problem  becomes
 \small
 \begin{eqnarray}
   &    \displaystyle\sup_{\bm{u}_{N-1} \in U}\inf_{\mathcal{E}_{N-1} \in \mathcal{B}_{\eta_{N-1}}}     &
   \mathbb{E}\Bigg(\mathcal{E}_{N-1}W_{N-1}\Big(\bm{u}^\intercal_{N-1}\bm{R}_{N} \nonumber \\
   &                                                                                                    &
   - \kappa_{N-1}\sqrt{\bm{u}^\intercal_{N-1}\bm{\Sigma}_{N-1}\bm{u}_{N-1}}  \ \Big) \Big| W_{N-1} = x\Bigg).   \nonumber
 \end{eqnarray}
 \normalsize
 which reduces to
 \small
 \begin{eqnarray}
   &    \displaystyle\sup_{\bm{u}_{N-1} \in U}\inf_{\mathcal{E}_{N-1} \in \mathcal{B}_{\eta_{N-1}}}     &
   \Bigg(\mathbb{E}\big(\mathcal{E}_{N-1}\bm{u}^\intercal_{N-1}\bm{R}_{N}\big) - \kappa_{N-1}\sqrt{\bm{u}^\intercal_{N-1}\bm{\Sigma}_{N-1}\bm{u}_{N-1}} \ \Bigg). \nonumber \\
   \label{eqn:step1op}
 \end{eqnarray}
 \normalsize
 Let us look at the inner optimization problem, i.e.,
 \small
  \begin{eqnarray}\label{eqn:step1opinner}
   &   \displaystyle\inf_{\mathcal{E}_{N-1} \in \mathcal{B}_{\eta_{N-1}}}   &   \Bigg(\mathbb{E}\big(\mathcal{E}_{N-1}\bm{u}^\intercal_{N-1}\bm{R}_{N}\big)
                                                                              - \kappa_{N-1}\sqrt{\bm{u}^\intercal_{N-1}\bm{\Sigma}_{N-1}\bm{u}_{N-1}}\ \Bigg).
 \end{eqnarray}
 \normalsize
 We can write down the Lagrangian as
 \small
 \begin{eqnarray}
   L_{N-1}(\mathcal{E}_{N-1},\theta_{N-1}) &  =  &    \mathbb{E}\Big(\mathcal{E}_{N-1}\bm{u}^\intercal_{N-1}\bm{R}_{N}\Big)
                                                    - \kappa_{N-1}\sqrt{\bm{u}^\intercal_{N-1}\bm{\Sigma}_{N-1}\bm{u}_{N-1}}                \nonumber \\
                                           &     &   + \theta_{N-1}\Big(\mathbb{E}\big(\mathcal{E}_{N-1}\log(\mathcal{E}_{N-1})\big) - \eta_{N-1}\Big).  \nonumber
 \end{eqnarray}
 \normalsize
 By setting the derivative (with respect to $\mathcal{E}_{N-1}$) of the expression under the expectation of the Lagrangian to be equal to zero, we obtain
 \small
 \begin{eqnarray}
    \bm{u}^\intercal_{N-1}\bm{R}_{N} + \theta_{N-1}\log(\mathcal{E}_{N-1}) + \theta_{N-1}    &   =   &   0.  \ \ \ \mathbb{P}\textnormal{-a.s.}, \nonumber
 \end{eqnarray}
 \normalsize
 Solving the above equation together with the fact that all alternative distributions have a proper density, i.e.,
 \small
 \begin{eqnarray}
   \mathbb{E}(\mathcal{E}_{N-1})    &  =  &  1, \nonumber
 \end{eqnarray}
 \normalsize
 we obtain
 \small
 \begin{eqnarray}\label{eqn:solKLop}
   \mathcal{E}_{N-1}^{\ast}   &   =   &   \displaystyle\frac{\exp\big(-\bm{R}_{N}^\intercal\bm{u}_{N-1}\frac{1}{\theta_{N-1}}\big)}
                                          {\mathbb{E}\Big(\exp\big(-\bm{R}_{N}^\intercal\bm{u}_{N-1}\frac{1}{\theta_{N-1}}\big)\Big)}
                                          \ \ \ \mathbb{P}\textnormal{-a.s.},
 \end{eqnarray}
 \normalsize
 for some $\theta_{N-1}$ such that
 \small
 \begin{eqnarray}
    \mathbb{E}\Big(\exp\big(-\bm{R}^\intercal_{N}\bm{u}^{\ast}_{N-1}\frac{1}{\theta_{N-1}^{\ast}}\big)\Big) < \infty.  \nonumber
 \end{eqnarray}
 \normalsize
 We can verify that (\ref{eqn:solKLop}) is indeed the optimal solution by using a convexity argument. We refer to the proof of \cite[proposition 3.1]{Lam16}
 for more details. \\

 Now, since the set
 \small
 \begin{eqnarray}
   \{\mathcal{E}: \mathcal{R}(\mathcal{E}) < \eta_{N-1}\} \nonumber
 \end{eqnarray}
 \normalsize
 is not empty, by \cite[theorem 2.1]{BenTC88}, strong duality holds. This implies (see \cite[pp. 242--243]{BoyV09}) that  the optimal solution
 $\mathcal{E}^{\ast}_{N-1}$ and its corresponding $\theta_{N-1}$ satisfies the following system:
 \small
 \begin{eqnarray}
    \theta_{N-1}\Big(\mathbb{E}\big(\mathcal{E}^{\ast}_{N-1}\log(\mathcal{E}^{\ast}_{N-1})\big) - \eta_{N-1}\Big)   &   =   &  0, \nonumber \\
    \mathbb{E}\big(\mathcal{E}^{\ast}_{N-1}\log(\mathcal{E}^{\ast}_{N-1})\big)    &  \leq  &  \eta_{N-1}, \nonumber \\
                                                      \theta_{N-1}  &   >    &  0.    \nonumber
 \end{eqnarray}
 \normalsize
 We denote the solution $\theta_{N-1}$ of this system as $\theta_{N-1}^{\ast}$. \\

 Next, with (\ref{eqn:solKLop}), the optimization problem (\ref{eqn:step1op}) becomes
 \small
 \begin{eqnarray}
   &    \displaystyle\sup_{\bm{u}_{N-1} \in U}   &    \Bigg(-\theta_{N-1}^{\ast}\log\mathbb{E}\Big(\exp\big(-\bm{R}_{N}^\intercal\bm{u}_{N-1}
                                                      \frac{1}{\theta_{N-1}^{\ast}}\big)\Big)    \nonumber \\
   &                                             &  - \kappa_{N-1}\sqrt{\bm{u}^\intercal_{N-1}\bm{\Sigma}_{N-1}\bm{u}_{N-1}}
                                                    - \eta_{N-1}\theta_{N-1}^{\ast}\Bigg).      \label{eqn:opuN-1}
 \end{eqnarray}
 \normalsize
 One may note that the expression under the supremum is actually the optimal dual of (\ref{eqn:step1opinner}). This can be confirmed by applying \cite[lemma 2.1]{BenTC88}. \\

 By Lemma B.1 (see Appendix B), we obtain the unique optimum of (\ref{eqn:opuN-1}) which satisfies the following system of nonlinear equations:
 \small
 \begin{eqnarray}
   \bm{u}^{\ast}_{N-1}    &   =   &   \frac{S_{N-1}^{\ast}}{\kappa_{N-1}}\Big(\bm{\Sigma}_{N-1}^{-1}\bm{X}^{\ast}_{N-1}
                                    - \frac{b_{N-1}^{\ast}\bm{\Sigma}_{N-1}^{-1}\bm{1}}{a_{N-1}}\Big) + \frac{\bm{\Sigma}_{N-1}^{-1}\bm{1}}{a_{N-1}}, \nonumber \\
         S^{\ast}_{N-1}   &   =   &  \sqrt{\frac{\frac{1}{a_{N-1}}}{1 - \frac{h_{N-1}}{\kappa_{N-1}^2} + \frac{(b_{N-1}^{\ast})^2}{\kappa_{N-1}^2 a_{N-1}}}}
                              =      \sqrt{\frac{\frac{1}{a_{N-1}}}{1 - \frac{1}{\kappa_{N-1}^2}g_{N-1}^{\ast}}}, \nonumber
 \end{eqnarray}
 \begin{eqnarray}
      \bm{X}^{\ast}_{N-1} &   =   &  \frac{\mathbb{E}\Big(\exp(-\bm{R}^\intercal_{N}\bm{u}_{N-1}^{\ast}\frac{1}{\theta_{N-1}^{\ast}})\bm{R}_{N}\Big)}
                                     {\mathbb{E}\Big(\exp(-\bm{R}_{N}^\intercal\bm{u}_{N-1}^{\ast}\frac{1}{\theta_{N-1}^{\ast}})\Big)}. \nonumber
 \end{eqnarray}
 \normalsize
 This follows from the proof of \cite[theorem 4]{BanGPW16}. Thus, the corresponding value function is given by
 \small
 \begin{eqnarray}
   V(N-1,x) = xG_{N-1}(\bm{u}_{N-1}^{\ast}, \theta_{N-1}^{\ast}),   \nonumber
 \end{eqnarray}
 \normalsize
 where
 \small
 \begin{eqnarray}
             G_{N-1}(\bm{u}_{N-1}^{\ast}, \theta_{N-1}^{\ast})
   &  =  & - \theta_{N-1}^{\ast}\log\mathbb{E}\Big(\exp\big(-\bm{R}_{N}^\intercal\bm{u}_{N-1}^{\ast}\frac{1}{\theta_{N-1}^{\ast}}\big)\Big) \nonumber \\
   &     & - \kappa_{N-1}S_{N-1}^{\ast} - \eta_{N-1}\theta_{N-1}^{\ast}.   \nonumber
 \end{eqnarray}
 \normalsize
 \textbf{Step 2}: Now, when $m = N-2$, the optimization problem becomes
 \small
 \begin{eqnarray}\label{eqn:step2op}
   &    \displaystyle\sup_{\bm{u}_{N-2} \in U}\inf_{\mathcal{E}_{N-2} \in \mathcal{B}_{\eta_{N-2}}}     &
        \Bigg(\mathbb{E}\Big(\mathcal{E}_{N-2}\bm{R}_{N-1}^\intercal\bm{u}_{N-2}\Big)
      + e^{-\eta_{N-1}c_{N-1}\kappa_{N-1}}\times \nonumber \\
   &                                                                                                    &
        G_{N-1}(\bm{u}_{N-1}^{\ast}, \theta_{N-1}^{\ast})\mathbb{E}\Big(\bm{R}_{N-1}^\intercal\bm{u}_{N-2}\Big) \nonumber \\
  &                                                                                                     &
      - \kappa_{N-2}\sqrt{\bm{u}^\intercal_{N-2}\bm{\Sigma}_{N-2}\bm{u}_{N-2}}\Bigg).
 \end{eqnarray}
 \normalsize
 Again, let us write down the Lagrangian
 \small
 \begin{eqnarray}
               L_{N-2}(\mathcal{E}_{N-2},\theta_{N-2})
   &  =  &     \mathbb{E}\Big(\mathcal{E}_{N-2}\bm{R}_{N-1}^\intercal\bm{u}_{N-2}\Big)
            +  e^{-\eta_{N-1}c_{N-1}\kappa_{N-1}}\times    \nonumber  \\
   &     &     G_{N-1}(\bm{u}_{N-1}^{\ast}, \theta_{N-1}^{\ast})\mathbb{E}\Big(\bm{R}_{N-1}^\intercal\bm{u}_{N-2}\Big)        \nonumber \\
   &     &  -  \kappa_{N-2}\sqrt{\bm{u}^\intercal_{N-2}\bm{\Sigma}_{N-2}\bm{u}_{N-2}}                            \nonumber  \\
   &     &  +  \theta_{N-2}\Big(\mathbb{E}\big(\mathcal{E}_{N-2}\log(\mathcal{E}_{N-2})\big) - \eta_{N-2}\Big).  \nonumber
 \end{eqnarray}
 \normalsize
 As in \textbf{Step 1}, we obtain the optimal $\mathcal{E}_{N-2}^{\ast}$:
 \small
 \begin{eqnarray}\label{eqn:KLop2}
   \mathcal{E}_{N-2}^{\ast}   &   =   &   \displaystyle\frac{\exp\big(-\bm{R}_{N-1}^\intercal\bm{u}_{N-2}\frac{1}{\theta_{N-2}}\big)}
                                         {\mathbb{E}\Big(\exp\big(-\bm{R}_{N-1}^\intercal\bm{u}_{N-2}\frac{1}{\theta_{N-2}}\big)\Big)} \ \ \ \mathbb{P}\textnormal{-a.s.},
 \end{eqnarray}
 \normalsize
 by solving
 \small
 \begin{eqnarray}
    \bm{u}^\intercal_{N-2}\bm{R}_{N-1} + \theta_{N-2}\log(\mathcal{E}_{N-2}) + \theta_{N-2}    &   =   &   0, \ \ \ \mathbb{P}\textnormal{-a.s.}, \nonumber
 \end{eqnarray}
 \normalsize
 together with the fact that all alternative distributions have a proper density, i.e.,
 \small
 \begin{eqnarray}
   \mathbb{E}(\mathcal{E}_{N-2})    &  =  &  1. \nonumber
 \end{eqnarray}
 \normalsize
  Moreover, the optimal $\mathcal{E}_{N-2}^{\ast}$ and its associated optimal $\theta_{N-2}$ satisfy the following system:
  \small
 \begin{eqnarray}
    \theta_{N-2}\Big(\mathbb{E}\big(\mathcal{E}_{N-2}\log(\mathcal{E}_{N-2})\big) - \eta_{N-2}\Big)   &   =   &  0, \nonumber \\
    \mathbb{E}\big(\mathcal{E}_{N-2}\log(\mathcal{E}_{N-2})\big)    &  \leq  &  \eta_{N-2}, \nonumber \\
                                                   \theta_{N-2}     &   >    &  0.    \nonumber
 \end{eqnarray}
 \normalsize
 We denote such $\theta_{N-2}$ as $\theta_{N-2}^{\ast}$. \\

 Now, with (\ref{eqn:KLop2}), the optimization problem (\ref{eqn:step2op}) becomes
 \small
 \begin{eqnarray}\label{eqn:opuN-2}
    &   \displaystyle\sup_{\bm{u}_{N-2} \in U}  &
       \Bigg(-\theta_{N-2}^{\ast}\log\mathbb{E}\Big(\exp\big(-\bm{R}_{N-1}^\intercal\bm{u}_{N-2}\frac{1}{\theta_{N-2}^{\ast}}\big)\Big)
    +  e^{-\eta_{N-1}c_{N-1}\kappa_{N-1}}\times \nonumber \\
    &                                           &  G_{N-1}(\bm{u}_{N-1}^{\ast}, \theta_{N-1}^{\ast})\mathbb{E}\Big(\bm{R}_{N-1}^\intercal\bm{u}_{N-2}\Big)         \nonumber \\
    &                                           &  - \kappa_{N-2}\sqrt{\bm{u}^\intercal_{N-2}\bm{\Sigma}_{N-2}\bm{u}_{N-2}} - \eta_{N-2}\theta_{N-2}^{\ast}\Bigg).
 \end{eqnarray}
 \normalsize

 By Lemma B.1, we obtain the unique optimum of (\ref{eqn:opuN-2}) which satisfies the following system of nonlinear equations:
 \small
 \begin{eqnarray}
   \bm{u}^{\ast}_{N-2}    &   =   &   \frac{S_{N-2}^{\ast}}{\kappa_{N-2}}\Big(\bm{\Sigma}_{N-2}^{-1}\bm{X}^{\ast}_{N-2}
                                    - \frac{b_{N-2}^{\ast}\bm{\Sigma}_{N-2}^{-1}\bm{1}}{a_{N-2}}\Big) + \frac{\bm{\Sigma}_{N-2}^{-1}\bm{1}}{a_{N-2}}, \nonumber \\
         S^{\ast}_{N-2}   &   =   &   \sqrt{\frac{\frac{1}{a_{N-2}}}{1 - \frac{h_{N-2}}{\kappa_{N-2}^2} + \frac{(b_{N-2}^{\ast})^2}{\kappa_{N-2}^2 a_{N-2}}}}
                                    = \sqrt{\frac{\frac{1}{a_{N-2}}}{1 - \frac{1}{\kappa_{N-2}^2}g_{N-2}^{\ast}}}, \nonumber \\
      \bm{X}^{\ast}_{N-2} &   =   &   \frac{\mathbb{E}\Big(\exp(-\bm{R}^\intercal_{N-1}\bm{u}_{N-2}^{\ast}\frac{1}{\theta_{N-2}^{\ast}})\bm{R}_{N-1}\Big)}
                                      {\mathbb{E}\Big(\exp(-\bm{R}_{N-1}^\intercal\bm{u}_{N-2}^{\ast}\frac{1}{\theta_{N-2}^{\ast}})\Big)}     \nonumber \\
                          &       & + e^{-\eta_{N-1}c_{N-1}\kappa_{N-1}}G_{N-1}(\bm{u}_{N-1}^{\ast}, \theta_{N-1}^{\ast})\mathbb{E}(\bm{R}_{N-1}). \nonumber
 \end{eqnarray}
 \normalsize
 This again follows from the proof of  \cite[theorem 4]{BanGPW16}. Thus, the corresponding value function is given by
 \small
 \begin{eqnarray}
   V(N-2,x) = xG_{N-2}(\bm{u}_{N-2}^{\ast}, \theta_{N-2}^{\ast}),   \nonumber
 \end{eqnarray}
 \normalsize
 where
 \small
 \begin{eqnarray}
              G_{N-2}(\bm{u}_{N-2}^{\ast}, \theta_{N-2}^{\ast})
   &  =  &  -  \theta_{N-2}^{\ast}\log\mathbb{E}\Big(\exp\big(-\bm{R}_{N-1}^\intercal\bm{u}_{N-2}^{\ast}\frac{1}{\theta_{N-2}^{\ast}}\big)\Big)  \nonumber \\
   &     &  +  e^{-\eta_{N-1}c_{N-1}\kappa_{N-1}}G_{N-1}(\bm{u}_{N-1}^{\ast}, \theta_{N-1}^{\ast})\times \nonumber \\
   &     &  \mathbb{E}\Big(\bm{R}_{N-1}^\intercal\bm{u}_{N-2}^{\ast}\Big) - \kappa_{N-2}S_{N-2}^{\ast} - \eta_{N-2}\theta_{N-2}^{\ast}.   \nonumber
 \end{eqnarray}
 \normalsize
 \textbf{Step 3}: Next, for $m = N-2, N-1, ..., 1,  0$, we use a backward  induction step. Assume that the claim holds for $m = n+1$. We need to show that it holds for  $m = n$.
 When $m = n$, the optimization problem becomes
 \small
 \begin{eqnarray}\label{eqn:step3op}
   &    \displaystyle\sup_{\bm{u}_{n} \in U}\inf_{\mathcal{E}_{n} \in \mathcal{B}_{\eta_{n}}}     &
     \Bigg(\mathbb{E}\big(\mathcal{E}_{n}\bm{R}_{n+1}^\intercal\bm{u}_{n}\big)
   + e^{-\eta_{n+1}c_{n+1}\kappa_{n+1}}G_{n+1}(\bm{u}_{n+1}^{\ast}, \theta_{n+1}^{\ast}) \mathbb{E}\big(\bm{R}_{n+1}^\intercal\bm{u}_{n}\big)    \nonumber \\
   &                                                                                              &
   -\kappa_{n}\sqrt{\bm{u}^\intercal_{n}\bm{\Sigma}_{n}\bm{u}_{n}}\Bigg). \ \ \ \ \ \
 \end{eqnarray}
 \normalsize
  The Lagrangian can be written as
 \small
 \begin{eqnarray}
             L_{n}(\mathcal{E}_{n},\theta_{n})
   &  =  &   \mathbb{E}\big(\mathcal{E}_{n}\bm{R}_{n+1}^\intercal\bm{u}_{n}\big) + e^{-\eta_{n+1}c_{n+1}\kappa_{n+1}}
             G_{n+1}(\bm{u}_{n+1}^{\ast}, \theta_{n+1}^{\ast})\times \nonumber \\
   &     &   \mathbb{E}\big(\bm{R}_{n+1}^\intercal\bm{u}_{n}\big) - \kappa_{n}\sqrt{\bm{u}^\intercal_{n}\bm{\Sigma}_{n}\bm{u}_{n}}
           + \theta_{n}\Big(\mathbb{E}\big(\mathcal{E}_{n}\log(\mathcal{E}_{n})\big) - \eta_{n}\Big).  \nonumber
 \end{eqnarray}
 \normalsize
 As in \textbf{Step 1} or \textbf{Step 2}, we obtain the optimal $\mathcal{E}_{n}^{\ast}$:
 \small
 \begin{eqnarray}\label{eqn:solKLop2}
   \mathcal{E}_{n}^{\ast}   &   =   &   \displaystyle\frac{\exp\big(-\bm{R}_{n+1}^\intercal\bm{u}_{n}\frac{1}{\theta_{n}}\big)}
                                         {\mathbb{E}\Big(\exp\big(-\bm{R}_{n+1}^\intercal\bm{u}_{n}\frac{1}{\theta_{n}}\big)\Big)} \ \ \ \mathbb{P}\textnormal{-a.s.},
 \end{eqnarray}
 \normalsize
 by solving
 \begin{eqnarray}
    \bm{u}^{T}_{n}\bm{R}_{n+1} + \theta_{n}\log(\mathcal{E}_{n}) + \theta_{n}    &   =   &   0, \ \ \  \mathbb{P}\textnormal{-a.s.}, \nonumber
 \end{eqnarray}
 together with the fact that the alternative distributions have a proper density, i.e.,
 \small
  \begin{eqnarray}
   \mathbb{E}(\mathcal{E}_{n})    &  =  &  1. \nonumber
 \end{eqnarray}
 \normalsize
 Moreover, the optimal $\mathcal{E}_{n}^{\ast}$ and its associated optimal $\theta_{n}$ satisfy the following system:
 \small
 \begin{eqnarray}
    \theta_{n}\Big(\mathbb{E}\big(\mathcal{E}_{n}\log(\mathcal{E}_{n})\big) - \eta_{n}\Big)   &   =   &  0, \nonumber \\
    \mathbb{E}\big(\mathcal{E}_{n}\log(\mathcal{E}_{n})\big)    &  \leq  &  \eta_{n}, \nonumber \\
                                                 \theta_{n}     &   >    &  0.    \nonumber
 \end{eqnarray}
 \normalsize
 We denote such solution as $\theta_{n}^{\ast}$. Now, with (\ref{eqn:solKLop2}), the optimization problem
 (\ref{eqn:step2op}) becomes
 \small
 \begin{eqnarray}\label{eqn:opuN-3}
    &    \displaystyle\sup_{\bm{u}_{n} \in U}   &
      \Bigg(-\theta_{n}^{\ast}\log\mathbb{E}\Big(\exp\big(-\bm{R}_{n+1}^\intercal\bm{u}_{n}\frac{1}{\theta_{n}^{\ast}}\big)\Big)
    + e^{-\eta_{n+1}c_{n+1}\kappa_{n+1}}\times           \nonumber \\
    &                                           &
G_{n+1}(\bm{u}_{n+1}^{\ast}, \theta_{n+1}^{\ast})\mathbb{E}\big(\bm{R}_{n+1}^\intercal\bm{u}_{n}\big)
    - \kappa_{n}\sqrt{\bm{u}^\intercal_{n}\bm{\Sigma}_{n}\bm{u}_{n}} - \eta_{n}\theta_{n}^{\ast}\Bigg).
 \end{eqnarray}
 \normalsize
 By Lemma B.1, we obtain the unique optimum of (\ref{eqn:opuN-3}) which satisfies the following system of nonlinear equations:
 \small
 \begin{eqnarray}
   \bm{u}^{\ast}_{n}    &   =   &    \frac{S_{n}^{\ast}}{\kappa_{n}}\Big(\bm{\Sigma}_{n}^{-1}\bm{X}^{\ast}_{n} - \frac{b_{n}^{\ast}\bm{\Sigma}_{n}^{-1}\bm{1}}{a_{n}}\Big)
                                   + \frac{\bm{\Sigma}_{n}^{-1}\bm{1}}{a_{n}}, \nonumber \\
         S^{\ast}_{n}   &   =   &    \sqrt{\frac{\frac{1}{a_{n}}}{1 - \frac{h_{n}}{\kappa_{n}^2} + \frac{(b_{n}^{\ast})^2}{\kappa_{n}^2 a_{n}}}}
                                   = \sqrt{\frac{\frac{1}{a_{n}}}{1 - \frac{1}{\kappa_{n}^2}g_{n}^{\ast}}}, \nonumber \\
      \bm{X}^{\ast}_{n} &   =   &    \frac{\mathbb{E}\Big(\exp(-\bm{R}^\intercal_{n+1}\bm{u}_{n}^{\ast}\frac{1}{\theta_{n}^{\ast}})\bm{R}_{n+1}\Big)}
                                     {\mathbb{E}\Big(\exp(-\bm{R}_{n+1}^\intercal\bm{u}_{n}^{\ast}\frac{1}{\theta_{n}^{\ast}})\Big)}      \nonumber  \\
                        &       & + e^{-\eta_{n+1}c_{n+1}\kappa_{n+1}}G_{n+1}(\bm{u}_{n+1}^{\ast}, \theta_{n+1}^{\ast})\mathbb{E}(\bm{R}_{n+1}). \nonumber
 \end{eqnarray}
 \normalsize
 The corresponding value function is given by
 \small
 \begin{eqnarray}
   V(n,x) = xG_{n}(\bm{u}_{n}^{\ast}, \theta_{n}^{\ast}),   \nonumber
 \end{eqnarray}
 \normalsize
 where
 \small
 \begin{eqnarray}
             G_{n}(\bm{u}_{n}^{\ast}, \theta_{n}^{\ast})
   &  =  & - \theta_{n}^{\ast}\log\mathbb{E}\Big(\exp\big(-\bm{R}_{n+1}^\intercal\bm{u}_{n}^{\ast}\frac{1}{\theta_{n}^{\ast}}\big)\Big)      \nonumber \\
   &     & + e^{-\eta_{n+1}c_{n+1}\kappa_{n+1}}G_{n+1}(\bm{u}_{n+1}^{\ast}, \theta_{n+1}^{\ast})\mathbb{E}\big(\bm{R}_{n+1}^\intercal\bm{u}_{n}^{\ast}\big)  \nonumber \\
   &     & - \kappa_{n}S_{n}^{\ast} - \eta_{n}\theta_{n}^{\ast}.   \nonumber
 \end{eqnarray}
 \normalsize
 This completes the proof.
\end{proof}

\section{Some Discussions of the Model.} \label{sec:ssproh}
In this section we discuss some modelling, theoretical and computational issues that may arise when we implement our approach. Also, we briefly discuss how to handle short
selling constraints and a generalization to the case of $\alpha$-divergence. \\

Firstly, it is worth noting that the uncertainty of the underlying distribution only enters into the expectation part, and the standard deviation part is added as a further
penalization. Mathematically, it is difficult to include the uncertainty in the standard deviation part as trying to do so leads to losing the time consistency property.
From a modelling and risk management perspective, since the error of estimation in the expectation part is far more serious than the standard deviation (see,  e.g.,
\cite{ChoZ93}), handling uncertainty in the expectation part is more important. \\

Secondly, as discussed in \cite{BanGPW16}, the strategy calculated in Theorem \ref{thm:opstrategysys} is optimal provided that the wealth stays positive. Of course, there
is no guarantee that this will always be the case. However, depending on risk tolerance, the investor may as well be happy to adopt such a strategy if the probability that the
wealth stays positive exceeds  a certain threshold. For more detailed discussions, we refer to \cite{BanGPW16}. To obtain such an optimal strategy, and to determine whether
the investor should  adopt such a   strategy, we modify \cite[algorithm 1]{BanGPW16}. This yields \textbf{Algorithm A}. It is worth noting that unlike algorithm 1, there
is no explicitly given lower bound on the risk aversion parameter $\kappa_{n}$. Instead, we constrain $\kappa_{n}$, so that the $\kappa_{n}$ chosen by the investor is a
valid risk aversion parameter in the sense that the system of nonlinear equations in Theorem \ref{thm:opstrategysys} is well defined.   \\

\begin{algorithm}[H]
  \caption{Multi-period MSD Robust Portfolio Selection}\label{algo:pss}
  \begin{center}
  \begin{algorithmic}[1]
   \scriptsize
    \State set abandon = false;
    \For{$n = N-1, ..., 0$}
        \State set $W_{n} = 1$ and select a $\kappa_{n} > 0$;
        \State solve  (\ref{eqn:opsolsys1}) and (\ref{eqn:opsolsys4}) simultaneously
               subject to $1-\frac{g_{n}^{\ast}}{\kappa_{n}^{2}} > 0$;
        \State calculate $p_{n}(\bm{u}_{n}) = \mathbb{P}\big(W_{n+1} > 0\big)$;
        \If{$p_{n}(\bm{u}_{n}) > 1 - \exp(-\kappa_{n})$}
          \State keep the strategy $\bm{u}_{n}$;
        \Else
          \State abandon = true;
        \EndIf
    \EndFor
    \If{abandon == false}
      \State take the investment;
    \Else
      \State abandon the investment;
    \EndIf
  \end{algorithmic}
  \end{center}
\end{algorithm}

In terms of computation, we see that to compute the robust strategy, we have to solve the system (\ref{eqn:opsolsys1}) - (\ref{eqn:opsolsys4}) simultaneously which requires
evaluation of the expectations in (\ref{eqn:opsolsys3}) - (\ref{eqn:opsolsys4}). It is almost impossible to evaluate such expectations directly in this system because of the
complicated interdependence of these equations. This difficulty can be resolved by applying a Monte Carlo type approach (see \cite[section 3]{GlaX14}). The idea is to
replace the theoretical expectations by sample means via simulations. In this way, we end up with a system of nonlinear equations which can then be solved numerically. \\

  It is worth noting that we have assumed that short selling is allowed. In portfolio selection, it is often required to impose short selling constraints. In our
case, this corresponds to replacing the set $U$ by
\small
\begin{eqnarray}
  U^{\textnormal{short}} = \Big\{\bm{u} \in \mathbb{R}^{d}: \bm{1}^\intercal\bm{u} = 1, \ \ell_{i} < \bm{u}^{i} < b_{i}, \ \textnormal{for} \ i = 1, ..., d, \ \textnormal{and}
                           \ \ell_{i}, b_{i} \in \mathbb{R}\Big\}. \nonumber
\end{eqnarray}
\normalsize
Indeed, by a straightforward Kuhn-Tucker argument (see, e.g., \cite{BoyV09}), it is easy to see that we can still find the optimal strategy without
losing the semi-analytical form under a short selling restriction. It is worth noting that if we change the strict inequality constraints in $U^{\textnormal{short}}$ to
inequality constraints, this adds a further difficulty and we may lose the semi-analytical form of the optimal strategy. \\

Also note the possibility of a risk free asset in the portfolio, which often occurs in practice and is of interest. Mathematically, it causes  difficulty in that the matrix $\Sigma_n$ becomes singular and non-invertible.  The work of \cite{LaMa12} deals with the presence of a risk free asset in  a non-robust single-period scenario. Specifically, their Corollary 1  points out that when a {\it{single}} constraint in the form $\bm{1}^\intercal\bm{u}=1$ is imposed and the risk aversion parameter $\kappa$ is large enough then only a trivial solution  exists.
The trivial solution implies that one should be fully invested in the risk free asset. Since in our case we are working under the condition of this single constraint, the inclusion of a risk free asset would not provide any further insights.\\

If further linear equality constraints  are imposed  then Theorem 1 in \cite{LaMa12} tells us that for a {\it{specific}} form of these  constraints, under {\it{specific}} assumptions on the distribution of returns {\it{and}} for a large enough $\kappa $ (that is, for a risk-averse investor), that a non-trivial solution can be obtained. This solution  corresponds to putting non-zero weights to both the risky and the risk free components. However, at this stage it is difficult to generalize Theorem 1 of \cite{LaMa12} for our multi-period robust portfolio selection  scenario and this generalization has been left as a future research agenda.
\\

Finally, it is well known that the KL divergence can be generalized to the so-called $\alpha$-divergence (see Appendix A). Our approach can be applied to such a case.
However,  several issues arise  when using $\alpha$-divergence. To have a properly defined worst case distribution,  the underlying random variable, i.e., the return of the assets,  must  be bounded (see \cite[proposition 2.3]{GlaX14}). Also, it is not clear to us whether the optimal strategy exists for all $\alpha$. To  properly handle  the uncertainty in the case where the underlying distribution is heavy tailed, another divergence measure may be needed. We are very interested in this case but it will
be dealt with in another paper.

\section{Numerical Examples.}\label{sec:numex}

In this section, we demonstrate the use of our model to select optimal strategy and to quantify model risk. Suppose our interest is to find the best allocation of a portfolio
of three stocks from the customer service industry--Navitas, Domino and Tabcorp--over an investment horizon of 5 days, i.e., $N = 5$. The historical daily prices
of these  stocks traded on the Australian Securities \mbox{Exchange}\footnote{Data obtained from Yahoo Finance https://au.finance.yahoo.com/} have been collected over the
period \mbox{1 Jan 2015 - 31 Dec 2015}. The corresponding daily returns  form a set of $261$  data points. In this section,
without loss of generality, a few assumptions will be made. The risk aversion parameter of the investor $\kappa_{n}$ is assumed to be constant and equals to three (i.e.,
$\kappa_{n} = 3$ for $n = 0 , 1, 2, 3, 4$). The initial wealth is assumed to be one dollar, i.e., $W_{0} = 1$. The random daily returns are assumed to be independent and
identically distributed over the investment horizon and have mean $\bm{\mu}$ and covariance matrix $\bm{\Sigma}$ under the nominal distribution.

\subsection{Comparison of Optimal Robust and Non-Robust Portfolio.}
Let us first consider a special case of uncertainty in distribution (i.e., the uncertainty in parameters) and compare the performance of the optimal robust and non-robust
strategies. The convenience of this simple scenario is the fact that the closed form formula for the KL divergence is sometimes available. For example, if the nominal
distribution is a $d$-dimensional multivariate normal distribution with mean $\bm{\mu}$ and covariance matrix $\bm{\Sigma}$, and the worst case distribution is a
$d$-dimensional multivariate normal distribution with mean $\bar{\bm{\mu}}$ and covariance matrix $\bar{\bm{\Sigma}}$, then the KL divergence can be calculated as (see, e.g.,
\cite[p. 296]{NieCD17}):
\small
\begin{eqnarray}\label{eqn:KLmulnormal}
 \mathcal{R}(\mathcal{E}) &  =  &   \frac{1}{2}\Bigg(trace(\bm{\Sigma}^{-1}\bar{\bm{\Sigma}}) + (\bm{\mu} - \bar{\bm{\mu}})^\intercal\bm{\Sigma}^{-1}(\bm{\mu} - \bar{\bm{\mu}})
                                  - d + \log\Big(\frac{|\bm{\Sigma}|}{|\bar{\bm{\Sigma}}|}\Big)\Bigg),
\end{eqnarray}
\normalsize
where $|\cdot|$ denotes the determinant of a matrix. For the purpose of illustration, we will consider the case where $\bar{\bm{\mu}} = \gamma\times\bm{\mu}$ for some $\gamma
\in \mathbb{R}$ and $\bar{\bm{\Sigma}} = \bm{\Sigma}$. Based on the collected data, we calculate the expected returns and the covariance matrix of the returns (under the nominal
distribution) as listed below:
\begin{eqnarray}
   &    & \scriptsize\bm{\mu}  = \left(\begin{array}{c}   0.0007  \\   0.0022  \\  0.0016  \end{array}\right), \ \
          \scriptsize\bm{\Sigma}  = \left(\begin{array}{ccc}         0.0003  &  0.0001  &  0.0001      \\
                                                                     0.0001  &  0.0004  &  0.0001      \\
                                                                     0.0001  &  0.0001  &  0.0003
                                  \end{array}\right). \nonumber
\end{eqnarray}
Formally, the $\bm{\mu}$ and $\bm{\Sigma}$ are estimates, however, for simplicity of notation, we do not use $\hat{\bm{\mu}}$ and $\hat{\bm{\Sigma}}$. \\

Since in the worst case scenario for the model disturbance, the alternative distribution is on the boundary of the KL divergence ball, then the divergence between the two models
is equal to $\eta_{n}$ (see Theorem \ref{thm:opstrategysys}). Without loss of generality, let us assume the radius of divergence ball $\eta_{n}$ to be constant over the
entire investment horizon. This allows us to find a time homogeneous $\gamma$. For simplicity, we drop the time dependence of $\eta_{n}$ and simply write $\eta$. In addition,
from now on, we will always choose $(c_{2}, c_{3}, c_{4}, c_{5})$ such that $(c_{2}\eta\kappa, c_{3}\eta\kappa, c_{4}\eta\kappa, c_{5}\eta \kappa) = (7.5,  8.0, 8.5,  9.0)$.
The choice of such numbers seems to be arbitrary  but one may consider these values as the  investor's risk tolerance for uncertainty of distribution (in contrast to $\kappa$
which is the risk aversion of the investor's preference for a fixed distribution). Thus, it solely depends on the investor's choice. As a consequence, the investor will have
 their own freedom to choose the amount of penalization (i.e. the effect of $(-c_{n}\eta\kappa)$) that they would like to take when selecting the portfolio. \\

\begin{table}[h]
\resizebox{1.0\textwidth}{!}{\begin{minipage}{\textwidth}
\tiny
\caption{Performance of robust and non-robust optimal solution: Comparison 1}\label{table:t1}
\centering
\begin{tabular}{cccc}
\hline
\multicolumn{1}{c}{$\gamma$}  & \multicolumn{1}{c}{$\eta$}  & \multirow{2}{*}{\shortstack[l]{number of times robust \\ outperforms non-robust}} & \multicolumn{1}{c}{\%}   \\
                              &                             &                                                                                  &                           \\
\hline
             0.2139           &            0.0050           &                                         244429                                   &          48.89\%\\

            -1.4859           &            0.0500           &                                         285828                                   &          57.17\%          \\

            -2.5156           &            0.1000           &                                         309814                                   &          61.96\%          \\

            -3.9718           &            0.2000           &                                         336583                                   &          67.32\%          \\

            -5.0892           &            0.3000           &                                         362909                                   &          72.58\%           \\

            -6.0312           &            0.4000           &                                         378952                                   &          75.79\%           \\

            -6.8611           &            0.5000           &                                         391459                                   &          78.29\%           \\
\hline
\end{tabular}
\end{minipage}}
\end{table}

\begin{table}[h]
\resizebox{1.0\textwidth}{!}{\begin{minipage}{\textwidth}
\tiny
\caption{Performance of robust and non-robust optimal solution: Comparison 2}\label{table:t2}
\centering
\begin{tabular}{cccccccc}
\hline
\multicolumn{1}{c}{$\gamma$}& \multicolumn{1}{c}{$\eta$}  &  \multicolumn{3}{c}{$\mathbb{E}(W_{N})$}  &  \multicolumn{3}{c}{$\frac{\mathbb{E}(W_{N}) - W_{0}}{\sqrt{Var(W_{N})}}$}  \\
\cline{3-5}
\cline{6-8}
                            &                             &    robust   &  non-robust  &  difference  &        robust       &       non-robust     &      difference         \\
\cline{1-8}
            0.2139          &            0.0050           &    1.0015   &    1.0016    &   -0.0001    &      \ 0.0515       &       \ 0.0542       &         -0.0027          \\

           -1.4859          &            0.0500           &    0.9903   &    0.9891    &  \ 0.0012    &       -0.3263       &        -0.3691       &        \ 0.0428         \\

           -2.5156          &            0.1000           &    0.9842   &    0.9816    &  \ 0.0026    &       -0.5267       &        -0.6277       &        \ 0.1010         \\

           -3.9718          &            0.2000           &    0.9761   &    0.9711    &  \ 0.0050    &       -0.7831       &        -0.9960       &        \ 0.2129        \\

           -5.0892          &            0.3000           &0.9715   &    0.9631    &  \ 0.0084    &       -0.9073       &        -1.2808       &        \ 0.3735          \\

           -6.0312          &            0.4000           &    0.9678   &    0.9564    &  \ 0.0114    &       -0.9965       &        -1.5224       &        \ 0.5259          \\

           -6.8611          &            0.5000           &    0.9650   &    0.9505    &  \ 0.0145    &       -1.0526       &        -1.7363       &         \ 0.6837          \\
\hline
\end{tabular}
\end{minipage}}
\end{table}

\begin{figure}[h]
  \centering
  \includegraphics[width=.5\textwidth]{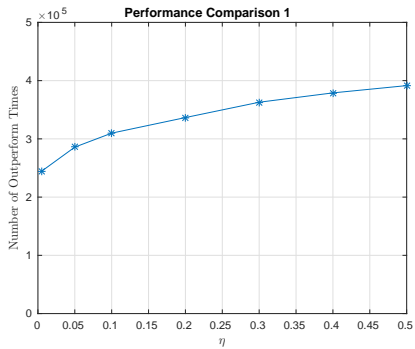}
  \caption{the number of times robust outperforms non-robust}
  \label{figure:f1}
\end{figure}

Now, suppose that we calculate under the worst case distribution. By generating data from this distribution, we compare the performance under the optimal robust and non-robust
strategies for different values of $\eta$. The optimal robust strategies are calculated by using $500,000$ Monte Carlo simulations. Then, we simulate $500,000$ daily return paths
(over $5$ days), and calculate the number of times, as well as the corresponding percentage, when the simulated terminal wealth under the robust case out-performs the non-robust
case (see \mbox{Table \ref{table:t1}}). \mbox{Figure \ref{figure:f1}} shows how the out-performance varies for different values of $\eta$. Other comparison metrics that we have
also calculated include the expected terminal wealth under both the robust and the non-robust case \footnote{the optimal non-robust strategy can be calculated by following
\cite{BanGPW16}.}, and the ratio of the difference between the expected terminal wealth and the initial wealth to the standard deviation of the terminal wealth (see
\mbox{Table \ref{table:t2}}). These are plotted in \mbox{Figure \ref{figure:f2}} and \mbox{Figure \ref{figure:f3}}, respectively. \\

\begin{figure}[h]
  \centering
  \includegraphics[width=.6\textwidth]{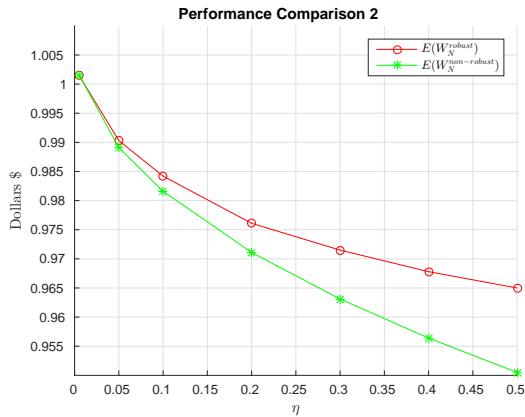}
  \caption{robust vs non-robust: expected terminal wealth}
  \label{figure:f2}
\end{figure}

\begin{figure}[h]
  \centering
  \includegraphics[width=.6\textwidth]{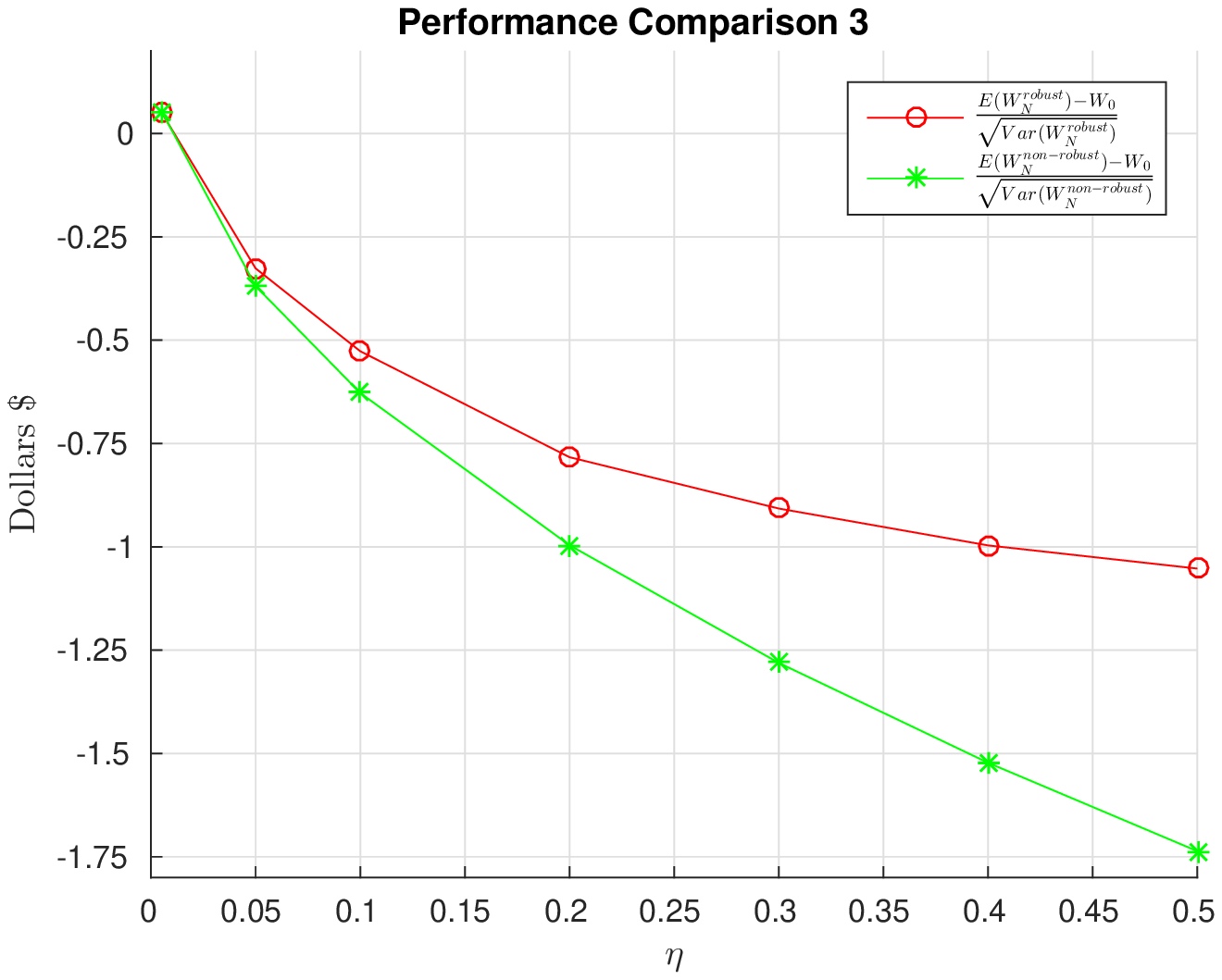}
  \caption{robust vs non-robust: ratio of the difference between the expected terminal wealth and the initial wealth to the standard deviation of the terminal wealth}
  \label{figure:f3}
\end{figure}

From Table \ref{table:t1}, we see that when $\eta$ is small, there is more than 50\% chance for the non-robust strategy to outperform the robust one. The corresponding expected
terminal wealth under the robust case is also higher than under the non-robust case. This suggests that if the worst case distribution is close to the nominal one (in the sense
of a small enough KL divergence), it may be hard to separate the two distributions and  thus it may be appropriate to use the nominal distribution. However, when $\eta$ is large,
it is clear that the robust strategy starts outperforming the non-robust one (with respect to each of the criteria that make sense in our discussion). This suggests that when the
worst case distribution is far from the nominal one, it is worth switching to the robust strategy. Furthermore, we notice that a profit is made when the radius is small and a loss is made when the radius is large. Thus, another suggestion could be that the optimal robust strategy is protecting against a loss of a portfolio due to the distribution
uncertainty and its impact is more apparent if a portfolio made a loss. \\

Another case where we have a closed form formula for the KL divergence is when both the nominal and alternative distributions are multivariate skew-normal. We will see in what
follows that in the extreme case where the nominal distribution degenerates to normal, we obtain similar comparison results as in the first example. \\

Let a nominal distribution be a multivariate skew-normal with location parameter $\bm{\mu}$, scale parameter $\bm{\Sigma}$ and skewness parameter $\bm{\xi}$. Next, we take the
worst case distribution to be a multivariate skew-normal with location parameter $\bar{\bm{\mu}}$, scale parameter $\bar{\bm{\Sigma}}$ and skewness parameter $\bar{\bm{\xi}}$.
The $d$-dimensional versions of these models are denoted by $\bm{Y} \sim SN_{d}(\bm{\mu},\bm{\Sigma},\bm{\xi})$ and $\bar{\bm{Y}} \sim SN_{d}(\bar{\bm{\mu}},\bar{\bm{\Sigma}},
\bar{\bm{\xi}})$ respectively. The closed form KL divergence is summarized in the following result,  for which the proof is in  \ref{appC}.
\begin{props}\label{props:skewnormalkl}
 Given a nominal distribution $\bm{Y} \sim SN_{d}(\bm{\mu},\bm{\Sigma},\bm{\xi})$ and an alternative distribution $\bar{\bm{Y}} \sim SN_{d}(\bar{\bm{\mu}},\bar{\bm{\Sigma}},
 \bar{\bm{\xi}})$, then the KL divergence between the nominal and the alternative distributions is given by:
 \small
 \begin{eqnarray}\label{eqn:KLskewmulnormal}
              \mathcal{R}_{skew}(\mathcal{E})
   &  =  &    \mathcal{R}(\mathcal{E}) + 2\sqrt{\frac{2}{\pi}}(\bar{\bm{\mu}}-\bm{\mu})^{\intercal}\bm{\Sigma}^{-1}\bar{\bm{\Sigma}}^{\frac{1}{2}}\bar{\bm{\xi}}
            - \mathbb{E}\Bigg(\log\Big(2\Phi\big(\Xi_{2}| 1 - \bm{\xi}^{T}\bm{\xi}\big)\Big)\Bigg)  \nonumber \\
   &     &  + \mathbb{E}\Bigg(\log\Big(2\Phi\big(\Xi_{1} | 1 - \bar{\bm{\xi}}^{T}\bar{\bm{\xi}}\big)\Big)\Bigg),
 \end{eqnarray}
 \normalsize
 where $\mathcal{R}(\mathcal{E})$ is given in (\ref{eqn:KLmulnormal}), $\Phi(\cdot|\sigma^{2})$ is the cumulative distribution function of a normal random variable with mean
 $0$ and variance $\sigma^{2}$, and
 \small
 \begin{eqnarray}
   \Xi_{1}  & \sim &  SN_{1}\Big(0,\bar{\bm{\xi}}^{\intercal}\bar{\bm{\xi}},\sqrt{\bar{\bm{\xi}}^{\intercal}\bar{\bm{\xi}}}\Big),    \nonumber \\
   \Xi_{2}  & \sim &  SN_{1}\Bigg(\bm{\xi}^{T}\bm{\Sigma}^{-\frac{1}{2}}(\bm{\mu} - \bar{\bm{\mu}}), \bm{\xi}^{T}
                     \bm{\Sigma}^{-\frac{1}{2}}\bar{\bm{\Sigma}}\bm{\Sigma}^{-\frac{1}{2}}\bm{\xi}, \frac{\bm{\xi}^{T}
                     \bm{\Sigma}^{-\frac{1}{2}}\bar{\bm{\Sigma}}^{\frac{1}{2}}\bar{\bm{\xi}}}{\sqrt{\bm{\xi}^{T}
                     \bm{\Sigma}^{-\frac{1}{2}}\bar{\bm{\Sigma}}\bm{\Sigma}^{-\frac{1}{2}}\bm{\xi}}}\Bigg).  \nonumber
 \end{eqnarray}
 \normalsize
\end{props}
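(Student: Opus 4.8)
The plan is to evaluate the relative entropy directly from its definition. Since $\mathcal{R}(\mathcal{E}) = \mathbb{E}(\mathcal{E}\log\mathcal{E})$ with $\mathcal{E} = f_{\bar{\bm{Y}}}/f_{\bm{Y}}$ and the outer expectation taken under the nominal law, a change of measure turns this into $\mathcal{R}_{skew}(\mathcal{E}) = \mathbb{E}_{\bar{\bm{Y}}}\big(\log(f_{\bar{\bm{Y}}}(\bar{\bm{Y}})/f_{\bm{Y}}(\bar{\bm{Y}}))\big)$, i.e. the expectation of the log-density ratio under the alternative skew-normal law. Writing each density in the form $2\,\phi_d(\cdot\,;\bm{\mu},\bm{\Sigma})\,\Phi\big(\bm{\xi}^\intercal\bm{\Sigma}^{-1/2}(\cdot-\bm{\mu})\mid 1-\bm{\xi}^\intercal\bm{\xi}\big)$ (and the barred analogue), the factor $2$ cancels and the integrand splits into a Gaussian log-ratio plus two logarithmic skew-correction terms, $\log\Phi(\cdots)$ coming from the numerator and the denominator respectively. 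I would treat these three pieces separately.

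For the Gaussian piece I would use the standardization $\bar{\bm{Y}} = \bar{\bm{\mu}} + \bar{\bm{\Sigma}}^{1/2}\bm{Z}$ with $\bm{Z}$ a standardized skew-normal, together with the two moment identities $\mathbb{E}(\bm{Z}) = \sqrt{2/\pi}\,\bar{\bm{\xi}}$ and $\mathbb{E}(\bm{Z}\bm{Z}^\intercal) = \bm{I}$. The second identity (which holds irrespective of the skewness, since the raw second moment about the location equals the scale matrix) makes the quadratic terms collapse exactly to $\mathrm{trace}(\bm{\Sigma}^{-1}\bar{\bm{\Sigma}})$ and $-d$, reproducing the normal divergence $\mathcal{R}(\mathcal{E})$ of (\ref{eqn:KLmulnormal}); the first identity produces the remaining cross term proportional to $(\bar{\bm{\mu}}-\bm{\mu})^\intercal\bm{\Sigma}^{-1}\bar{\bm{\Sigma}}^{1/2}\bar{\bm{\xi}}$.

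For the two skew-correction terms, the task is to identify the one-dimensional law of the argument inside each normal CDF. Here I would invoke the affine/linear-combination property of the multivariate skew-normal, most transparently via the convolution representation $\bm{Z} = \bar{\bm{\xi}}\,|U_0| + (\bm{I}-\bar{\bm{\xi}}\bar{\bm{\xi}}^\intercal)^{1/2}\bm{V}$ with $U_0,\bm{V}$ independent standard Gaussians. For the numerator term the argument is $\bar{\bm{\xi}}^\intercal\bar{\bm{\Sigma}}^{-1/2}(\bar{\bm{Y}}-\bar{\bm{\mu}}) = \bar{\bm{\xi}}^\intercal\bm{Z}$, and reading off the coefficient of $|U_0|$ and the residual variance shows it is distributed as $\Xi_1 \sim SN_1(0,\bar{\bm{\xi}}^\intercal\bar{\bm{\xi}},\sqrt{\bar{\bm{\xi}}^\intercal\bar{\bm{\xi}}})$. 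For the denominator term the argument $\bm{\xi}^\intercal\bm{\Sigma}^{-1/2}(\bar{\bm{Y}}-\bm{\mu})$ is an affine image of $\bm{Z}$; the constant shift gives the location $\bm{\xi}^\intercal\bm{\Sigma}^{-1/2}(\bm{\mu}-\bar{\bm{\mu}})$, while the linear part $\bm{\xi}^\intercal\bm{\Sigma}^{-1/2}\bar{\bm{\Sigma}}^{1/2}\bm{Z}$ yields squared scale $\bm{\xi}^\intercal\bm{\Sigma}^{-1/2}\bar{\bm{\Sigma}}\bm{\Sigma}^{-1/2}\bm{\xi}$ and the stated skewness, i.e. $\Xi_2$. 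The variance arguments $1-\bar{\bm{\xi}}^\intercal\bar{\bm{\xi}}$ and $1-\bm{\xi}^\intercal\bm{\xi}$ are simply carried over unchanged from the two density definitions. Assembling the Gaussian piece with the two $\mathbb{E}\log(2\Phi(\cdots))$ terms then gives (\ref{eqn:KLskewmulnormal}).

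I expect the main obstacle to be the bookkeeping in the skew-correction terms: correctly applying the linear-combination rule of the skew-normal to read off the scale and (especially) the skewness parameters of $\Xi_1$ and $\Xi_2$, and keeping track of the sign of the location shift. The Gaussian piece is routine once the two moment identities are in hand.
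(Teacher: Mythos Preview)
Your proposal is correct and follows essentially the same route as the paper. The only organizational difference is that the paper frames the computation via the cross-entropy decomposition $\mathcal{R}_{skew}(\mathcal{E}) = C(\bm{Y},\bar{\bm{Y}}) - C(\bar{\bm{Y}},\bar{\bm{Y}})$ and then recognizes $C(\bm{Y},\bar{\bm{Y}})$ as the normal cross-entropy plus skew corrections, whereas you compute the expected log-ratio directly; the underlying steps---evaluating the quadratic form via the first two moments of the standardized skew-normal and identifying the laws of $\Xi_1,\Xi_2$ from the stochastic representation $\bm{Z}=\bar{\bm{\xi}}\,|U_0|+(\bm{I}-\bar{\bm{\xi}}\bar{\bm{\xi}}^\intercal)^{1/2}\bm{V}$---are identical. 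One bookkeeping point worth double-checking: the constant shift in $\bm{\xi}^\intercal\bm{\Sigma}^{-1/2}(\bar{\bm{Y}}-\bm{\mu})$ after substituting $\bar{\bm{Y}}=\bar{\bm{\mu}}+\bar{\bm{\Sigma}}^{1/2}\bm{Z}$ is $\bm{\xi}^\intercal\bm{\Sigma}^{-1/2}(\bar{\bm{\mu}}-\bm{\mu})$, so watch the sign of the location parameter of $\Xi_2$ when you write it up.
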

It is worth noting that if both skewness parameters are equal to zero, we retain (\ref{eqn:KLmulnormal}). For more detailed discussions of (\ref{eqn:KLskewmulnormal}) and of
the multivariate skew-normal distribution, we refer to \cite{ConA12,AreG05}. \\

\begin{table}[h]
\resizebox{1.0\textwidth}{!}{\begin{minipage}{\textwidth}
\tiny
\caption{Performance of robust and non-robust optimal solution (skew-normal): Comparison 1}\label{table:t3}
\centering
\begin{tabular}{ccccc}
\hline
 \multicolumn{1}{c}{$\beta\%$}      &                                        $\xi$                                          & \multicolumn{1}{c}{$\eta$}  & \multirow{2}{*}{\shortstack[l]{number of times robust \\ outperforms non-robust}}&   \multicolumn{1}{c}{\%}   \\
                                    &                                                                                       &                             &                                                                                  &                            \\
\hline

              -78.30\%              &      $\left(\begin{array}{c} -0.0135 \\ -0.0982 \\  -0.0759 \end{array}\right)$       &            0.0050           &                                         244577                                   &         48.92\%            \\

             -242.54\%              &      $\left(\begin{array}{c} -0.0417 \\ -0.3042 \\  -0.2351  \end{array}\right)$      &            0.0500           &                                         284998                                   &         57.00\%             \\

             -334.61\%              &      $\left(\begin{array}{c} -0.0575 \\ -0.4196 \\  -0.3244  \end{array}\right)$      &            0.1000           &                                         307050                                   &         61.41\%             \\

             -449.99\%              &      $\left(\begin{array}{c} -0.0773  \\ -0.5643  \\ -0.4362 \end{array}\right)$      &            0.2000           &                                         329841                                   &         65.97\%\\

             -523.30\%              &      $\left(\begin{array}{c} -0.0899  \\  -0.6563 \\ -0.5073 \end{array}\right)$      &            0.3000           &                                         351005                                   &         70.20\%\\

             -572.42\%              &      $\left(\begin{array}{c} -0.0983 \\ -0.7178  \\ -0.5549 \end{array}\right)$       &            0.4000           &                                         361064                                   &         72.21\%\\

             -604.20\%              &      $\left(\begin{array}{c} -0.1038 \\ -0.7576 \\  -0.5857  \end{array}\right)$      &            0.5000           &                                         366798                                   &         73.36\%             \\
\hline
\end{tabular}
\end{minipage}}
\end{table}

For illustration, we take a $d$-dimensional multivariate normal distribution with mean $\bm{\mu}$ and covariance matrix $\bm{\Sigma}$ as the nominal distribution. The worst
case distribution is assumed to be a $d$-dimensional multivariate skew-normal distribution with a location parameter $\bar{\bm{\mu}}$, a scale parameter $\bar{\bm{\Sigma}}$
and a skewness parameter $\bar{\bm{\xi}}$, such that $\bar{\bm{\mu}}  =  \bm{\mu}$, and $\bar{\bm{\Sigma}}  =  \bm{\Sigma}$. We will choose $\bar{\bm{\xi}}$ such that the
mean of the worst case distribution is changed to $\beta\%$ of the mean of the nominal distribution. We note that the location parameter $\bm{\mu}$, and the scale parameter
$\bm{\Sigma}$ are \textit{not} the mean and the covariance matrix of the multivariate skew-normal. The reason that we choose this scenario is to show that the robust case is
indeed a strategy to safeguard against losses due to  a shift of the mean. This can be easily seen from the optimization procedure. Since we assume that $\eta$ is
constant over time,  we also have the same worst case distributions across time. \\

\begin{table}[h]
\resizebox{1.0\textwidth}{!}{\begin{minipage}{\textwidth}
\tiny
\caption{Performance of robust and non-robust optimal solution (skew-normal): Comparison 2}\label{table:t4}
\centering
\begin{tabular}{cccccccccc}
\hline
  \multicolumn{1}{c}{$\beta\%$}     &                                        $\xi$                                          & \multicolumn{1}{c}{$\eta$} & \multicolumn{3}{c}{$\mathbb{E}(W_{N})$} & \multicolumn{3}{c}{$\frac{\mathbb{E}(W_{N})- W_{0}}{\sqrt{Var(W_{N})}}$}  \\
\cline{4-9}
                                    &                                                                                       &                             &    robust   &   non-robust  &  difference  &      robust     &    non-robust    &     difference                 \\
\hline

              -78.30\%              &      $\left(\begin{array}{c} -0.0135 \\ -0.0982 \\  -0.0759 \end{array}\right)$       &            0.0050           &    1.0016   &     1.0016    &    0.0000    &     \ 0.0528    &    \ 0.0557      &       0.0029                   \\
\hline
             -242.54\%              &      $\left(\begin{array}{c} -0.0417 \\ -0.3042 \\  -0.2351  \end{array}\right)$      &            0.0500           &    0.9907   &     0.9896    &    0.0011    &      -0.3232    &     -0.3682      &       0.0450                    \\
\hline
             -334.61\%              &      $\left(\begin{array}{c} -0.0575 \\ -0.4196 \\  -0.3244  \end{array}\right)$      &            0.1000           &0.9853   &     0.9829    &    0.0024    &      -0.5187    &     -0.6316      &       0.1129                    \\
\hline
             -449.99\%              &      $\left(\begin{array}{c} -0.0773  \\ -0.5643  \\ -0.4362 \end{array}\right)$      &            0.2000           &0.9789   &     0.9745    &    0.0044    &      -0.7541    &     -1.0143      &       0.2602                    \\
\hline
             -523.30\%              &      $\left(\begin{array}{c} -0.0899  \\  -0.6563 \\ -0.5073 \end{array}\right)$      &            0.3000           &     0.9762   &     0.9692    &    0.0070    &      -0.8334    &     -1.3091      &       0.4757                    \\
\hline
             -572.42\%              &      $\left(\begin{array}{c} -0.0983 \\ -0.7178  \\ -0.5549 \end{array}\right)$       &            0.4000           &     0.9747   &     0.9657    &    0.0090    &      -0.8661    &     -1.5428      &       0.6767                    \\
\hline
             -604.20\%              &      $\left(\begin{array}{c} -0.1038 \\ -0.7576 \\  -0.5857  \end{array}\right)$      &            0.5000           &     0.9742   &     0.9634    &    0.0108    &      -0.8565    &     -1.7159      &       0.8594                    \\
\hline
\end{tabular}
\end{minipage}}
\end{table}

\begin{figure}[h]
  \centering
  \includegraphics[width=.5\textwidth]{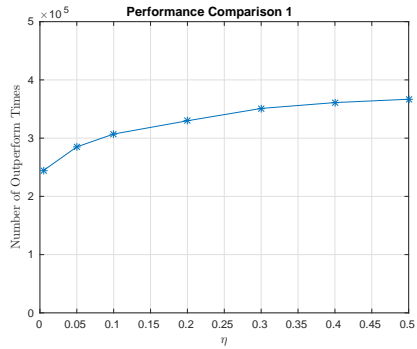}
  \caption{the number of times robust outperforms non-robust}
  \label{figure:f4}
\end{figure}

We again run $500,000$ simulations and calculate the number of times, as well as the corresponding percentage when the simulated terminal wealth under the robust case
outperforms the non-robust case for a range of divergences (see \mbox{Table \ref{table:t3}}). The divergences are calculated by (\ref{eqn:KLskewmulnormal}) using Monte
Carlo simulation with 500,000 simulations. \mbox{Figure \ref{figure:f4}} shows how the out-performance varies for different divergences. We also calculate the expected
terminal wealth under both the robust and the non-robust case, and the ratio of the difference between the expected terminal wealth and the initial wealth to the standard
deviation of the terminal wealth (see \mbox{Table \ref{table:t4}}). These have been plotted in \mbox{Figure \ref{figure:f5}} and \mbox{Figure \ref{figure:f6}},
respectively. \\

\begin{figure}[h]
  \centering
  \includegraphics[width=.6\textwidth]{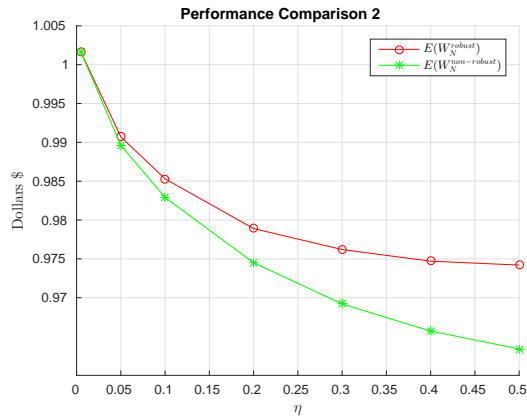}
  \caption{robust vs non-robust: expected terminal wealth}
  \label{figure:f5}
\end{figure}

\begin{figure}[h]
  \centering
  \includegraphics[width=.6\textwidth]{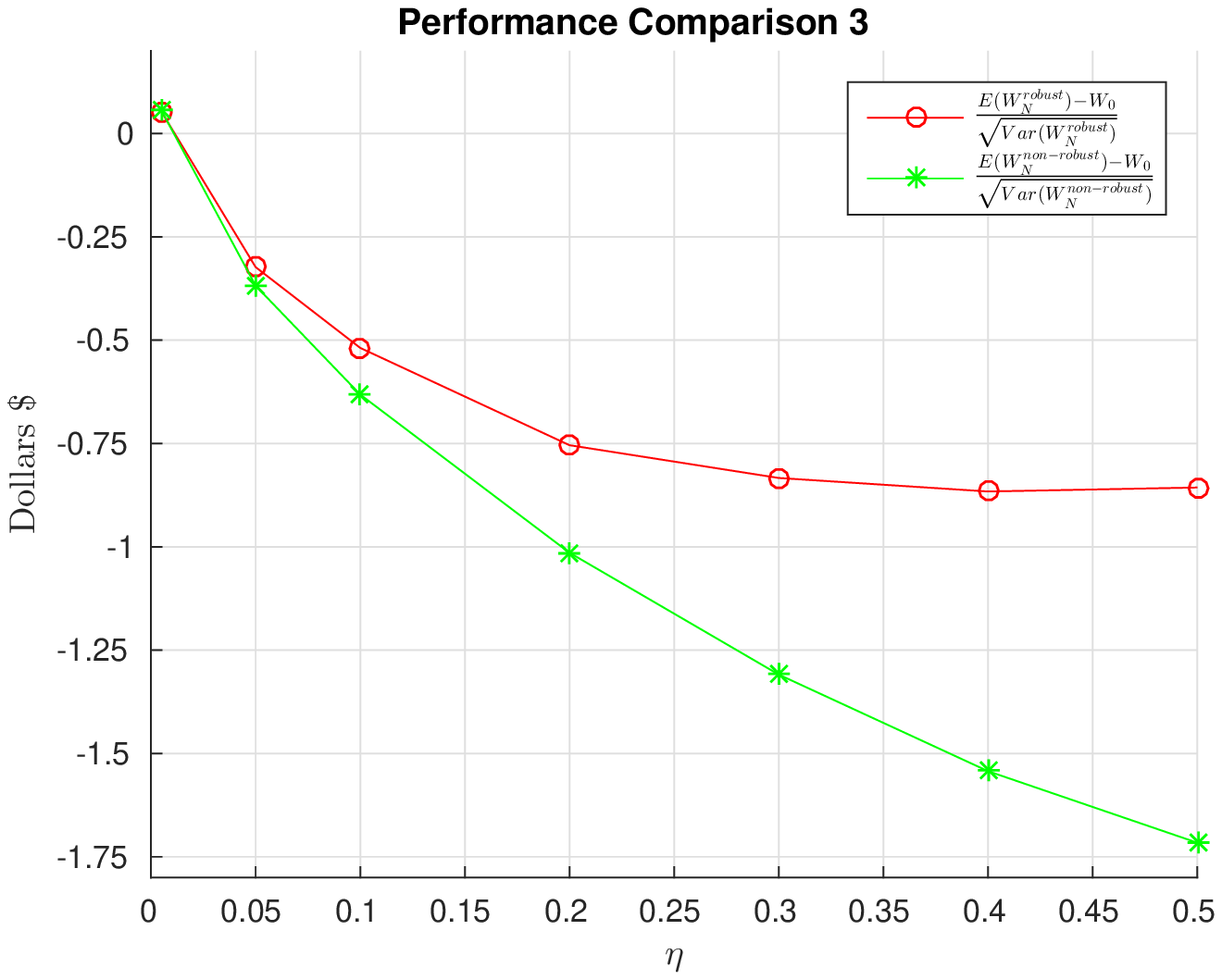}
  \caption{robust vs non-robust: ratio of the difference between the expected terminal wealth and the initial wealth to the standard deviation of the terminal wealth}
  \label{figure:f6}
\end{figure}


We notice that the performance comparison in \mbox{Figure \ref{figure:f5}} and \mbox{Figure \ref{figure:f6}} exhibits a similar pattern like in the previous example.
The number of times that the optimal robust strategy outperforms the non-robust strategy increases as the radius of the divergence increases. The values in the difference
columns in \mbox{Table \ref{table:t4}} also tend to increase as the radius of the KL ball increases. Hence, a larger uncertainty (i.e., a larger radius) makes the advantage
of the optimal robust strategy more apparent. As in the first example, we will choose the non-robust optimal strategy when the divergence is ``small" and choose the robust
optimal strategy when the divergence is ``large". \\

Up to this point, we only discussed when to choose an optimal non-robust strategy and when  to adopt  the robust one. It is worth noting that, although in some
cases it may be worth choosing the non-robust optimal strategy, we still need to quantify the amount of model risk involved in this action. In the next section, we define
model risk through the standard definition of risk in risk management, that is, as the quantile of a `profit-loss distribution' (see, e.g., \cite[p. 12]{ConGK10}). We also
provide a procedure to estimate the model risk by using empirical data. \\

\subsection{Quantification of Model Risk with Empirical Data.}

In this section, we discuss how to quantify model risk regarding  our optimal portfolio when only the empirical data is available and the true distribution is not
known. \\

Before we go into details, let us define model risk in terms of the quantile of a `profit-loss distribution'. Let $\mathbb{Q}$ denote the probability measure of a worst
case distribution, i.e., the empirical measure. The optimal portfolio is said to have a model risk of $\theta$ with a confidence level $q$ if
\small
\begin{eqnarray}
  \mathbb{Q}\Big(W^{non-robust}_{N} - W^{robust}_{N} \leq -\theta\Big)  &  =  &  1 - q.    \nonumber
\end{eqnarray}
\normalsize
In other words, we define model risk as the $(1 - q)$th-quantile of the distribution of the difference between the terminal wealth under the non-robust strategy and the robust
strategy. \\

Now, to quantify the model risk of our optimal portfolio, we divide the dataset into two subsets. The first subset, which we call dataset 1 (and it contains $201$ data points), is used
to estimate the expected value and the covariance matrix of the nominal distribution, which yields:

\begin{eqnarray}\label{eqn:empmeanandcov}
   &    & \scriptsize\tilde{\bm{\mu}} = \left(\begin{array}{c}         0.0009   \\  0.0019  \\  0.0014 \end{array}\right), \ \
          \scriptsize\tilde{\bm{\Sigma}} = \left(\begin{array}{ccc}    0.0003   &   0.0001   &   0.0001   \\
                                                                 0.0001   &   0.0003   &   0.0001   \\
                                                                 0.0001   &   0.0001   &   0.0002
                                  \end{array}\right).
\end{eqnarray}

The return distribution under the nominal distribution is again assumed to be a $d$-dimensional multivariate normal with mean and covariance matrix as given in
(\ref{eqn:empmeanandcov}). The second subset, which contains $60$ data points, is labelled as dataset 2. The distribution formed by taking equal probability for each
data point in dataset 2 is assumed to be a forecasted distribution of the incoming daily returns. We take this as the alternative distribution, and in the
following five days, assume that it is time-homogeneous, i.e., it is the same for each time period. \\

\begin{figure}[h]
   \center{}
   \includegraphics[width=0.65\textwidth]{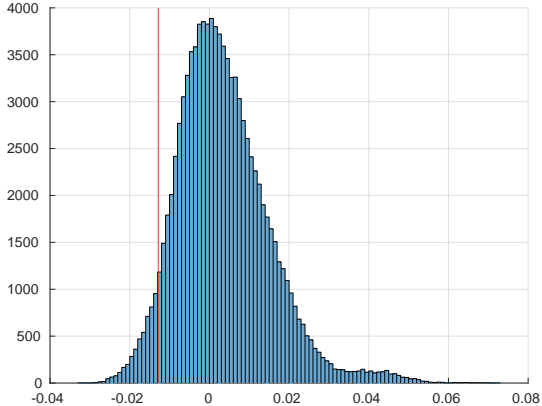}
   \caption{The distribution of $(W^{non-robust}_{N} - W^{robust}_{N})$} \label{figure:f7}
\end{figure}

\begin{algorithm}[h]
  \caption{Quantification of model risk by using dataset 1 and dataset 2}\label{algo:pss}
  \begin{center}
  \begin{algorithmic}[1]
    \scriptsize
    \State estimate $\tilde{\bm{\mu}}$ and $\tilde{\bm{\Sigma}}$ by using dataset 1;
    \For{$i = 1, 2, ..., 100,000$}
      \State generate $60$ independent sample from $N(\tilde{\bm{\mu}},\tilde{\bm{\Sigma}})$ (the normal sample);
      \State calculate ith $divergence$ by using (\ref{eqn:estkldiv}), the normal sample and dataset 2;
      \If{ith $divergence > 0$}
          \State keep the ith $divergence$;
        \Else
          \State abandon the ith $divergence$;
        \EndIf
    \EndFor
    \State estimate the divergence by taking the average of all the positive $divergence$ that have been calculated above;
    \State calculate the optimal robust strategy by using Theorem \ref{thm:opstrategysys} with $500,000$ simulations;
    \State calculate the optimal non-robust strategy;
    \For{$j = 1, ..., 100,000$}
        \State calculate $W^{non-robust}_{N} - W^{robust}_{N}$;
    \EndFor
    \State Find the $(1-q)$th-quantile of the distribution of $W^{non-robust}_{N} - W^{robust}_{N}$;
  \end{algorithmic}
  \end{center}
\end{algorithm}

Now, the next task is to estimate the divergence between the nominal distribution and the alternative distribution. We adapt an estimation procedure which is based on
the $k$th-nearest-neighbor approach (see, e.g., \cite{Per08, SchS15, WanKV09} for this method). Each time, a sample of $60$ data is generated from the nominal
distribution and the divergence between the nominal distribution and the alternative distribution is estimated by using this sample, the dataset 2, and Equation (\ref{eqn:estkldiv}).
To reduce variance, we repeat this procedure by taking the average over 100,000 repetitions (see, e.g., \cite{WanKV09}). Then, the estimated divergence is obtained as:
\small
\begin{eqnarray}
  \hat{\mathcal{R}}(\mathcal{E}) \approx  0.6455.  \nonumber
\end{eqnarray}
\normalsize
By knowing the KL divergence, we use a bootstrapping type of approach to sample $100,000$ data points from dataset 2. This allows us to construct the distribution of
$W^{non-robust}_{N} - W^{robust}_{N}$ (see \mbox{Figure \ref{figure:f7}}) from which the model risk can be estimated. The estimated model risk at $q = 95\%$ confidence
level is $0.0128$ (see the red vertical line in \mbox{Figure \ref{figure:f7}}). The interpretation is that if the optimal non-robust strategy is applied but the optimal
robust strategy turns out to be more appropriate, then $95\%$ of the time we would lose no more than $1.28$ cents for every one dollar. The complete procedure of quantifying
the model risk is summarized in \textbf{Algorithm B}.

\section{Conclusion.}
In this work, we have derived a semi-analytical form of an optimal robust strategy for an investment portfolio in which an uncertainty of distribution of return is involved.
The uncertainty is measured by the Kullback-Leibler divergence. We have applied our approach to several numerical examples and have suggested whether  to adopt the optimal
robust or non-robust strategy. In addition, we define model risk from the standard risk management perspective and present an algorithm for quantifying the model risk by using
empirical data. This delivers a convenient way of quantifying model risk in practice. \\

There are some possible variations and extensions of our work  that deserve further investigation.  One research question is about designing a  fair  way to perform an out-of-sample comparison of our method with the non-robust method. The purpose would be to compare  the
performance of the robust and the non-robust strategy directly rather than via assessing  the risk of applying  the non-robust strategy that  we have presented here. Another research question  is
to scrutinize  the cases where the size of the portfolio is very large (for example, $d > 100$). We have performed some initial testing of the performance of the solution
for portfolios  up to size 50  but do not  currently  have theoretical criteria to  guarantee the existence of a solution of the system in Theorem \ref{thm:opstrategysys}.
In addition, it is  worth investigating  the computational cost  as the problem scales to very large portfolio sizes. \\

\appendix

\section{Kullback-Leibler Divergence: Concept, Extension, and Estimation.}
The Kullback-Leibler (KL) divergence is a well known deviation measure between distributions, and has been discussed in many papers. Here, we present a brief summary of its
concept, extension and estimation by mainly following \cite{BreC16,GlaX14,SchS15}. \\

Fix a probability space $(\Omega, \mathcal{F}, \mathbb{P})$, and let us assume that the nominal distribution is described by the probability measure $\mathbb{P}$, and denote
its density by $f$. An alternative distribution described by a probability measure $\mathbb{Q}$ is assumed to be absolutely continuous with respect to $\mathbb{P}$, i.e., for
$A \in \mathcal{F}$ and $\mathbb{P}(A) = 0$, we have $\mathbb{Q}(A) = 0$. The density of an alternative distribution is denoted by $g$. Let \small$\displaystyle\mathcal{E} =
\frac{g}{f}$. \normalsize The KL divergence between the nominal and the alternative distribution is defined as
\scriptsize
\begin{eqnarray*}
 \mathcal{R}(\mathcal{E}) &   =   &  \mathbb{E}\Big(\mathcal{E}\log\mathcal{E}\Big)
                             :=      \mathbb{E}\Big(\mathcal{E}(\bm{\zeta})\log\mathcal{E}(\bm{\zeta})\Big)
                              =      \int \Big(\frac{g(\bm{\zeta})}{f(\bm{\zeta})}\Big)\log\Big(\frac{g(\bm{\zeta})}{f(\bm{\zeta})}\Big)f(\bm{\zeta})d\bm{\zeta}.
\end{eqnarray*}
\normalsize
The KL divergence can be generalized to the so-called $\alpha$-divergence. For $\alpha > 1$, the $\alpha$-divergence is defined as
\scriptsize
\begin{eqnarray*}
 \mathcal{R}(\mathcal{E})     =      \frac{\mathbb{E}\big(\mathcal{E}^{\alpha} - \alpha(\mathcal{E} - 1) - 1\big)}{\alpha(\alpha - 1)}
                          &  :=   &    \frac{\mathbb{E}\big((\mathcal{E}(\bm{\zeta}))^{\alpha} - \alpha(\mathcal{E}(\bm{\zeta}) - 1) - 1\big)}{\alpha(\alpha - 1)}  \\
                          &   =   &   \frac{\displaystyle\int\Bigg(\Big(\frac{g(\bm{\zeta})}{f(\bm{\zeta})}\Big)^{\alpha}
                                     - \alpha\Big(\Big(\frac{g(\bm{\zeta})}{f(\bm{\zeta})}\Big) - 1\Big) -  1\Bigg)f(\bm{\zeta})d\bm{\zeta}}{\alpha(\alpha - 1)}.
\end{eqnarray*}
\normalsize
By applying the L'H\^{o}spital's rule, one can show that the $\alpha$-divergence converges to the KL divergence as $\alpha \rightarrow 1$. In other words, the KL divergence
is the limiting case of the $\alpha$-divergence. \\

When we know the nominal and the alternative distributions precisely, sometimes we may be able to calculate the KL divergence analytically. In practice, however, we often
only have samples of the distributions. This requires estimation of the KL divergence. One way to estimate the KL divergence is by using the $k$th-nearest-neighbor
estimation approach. Suppose we have an independent identically distributed (i.i.d) sample $(\bm{Y}_{i})$ from the nominal distribution and another i.i.d sample
$(\tilde{\bm{Y}}_{i})$ from the alternative distribution. The estimated KL divergence between the two models by using the $k$th-nearest-neighbor approach is given by
\scriptsize
\begin{eqnarray}\label{eqn:estkldiv}
  \hat{\mathcal{R}}(\mathcal{E}) = \frac{1}{K}\sum_{i = 1}^{K}\log\Big(\frac{K (y_{k}(i))^{d}}{(K-1)(\tilde{y}_{k}(i))^{d}}\Big).
\end{eqnarray}
\normalsize
where $K$ is the sample size, $\tilde{y}_{k}(i)$ is the Euclidean distance of the $k$th-nearest-neighbor of $\tilde{\bm{Y}}_{i}$ in $(\tilde{\bm{Y}}_{j})_{j \neq i}$,
$y_{q}(i)$ is the Euclidean distance of the $k$th-nearest-neighbor of $\tilde{\bm{Y}}_{i}$ in $(\bm{Y}_{i})$, and $d$ is the dimension of the sample. In our case $d = 3$,
i.e., the number of assets in the portfolio. A simple rule to select $k$ is to take a small odd number. In this paper, we take $k = 5$. For more detailed discussion of the
$k$th-nearest-neighbor approach, we refer to \cite{SchS15,WanKV09}.

\section{A Useful Lemma.}
\begin{thm}[Lemma B.1]\label{lemma:convavity}
  The function
  \small
  \begin{eqnarray}
    h: \bm{u} \in U \rightarrow \Big(-\theta\log\mathbb{E}\Big(\exp\big(-\bm{u}^\intercal\bm{R}\frac{1}{\theta}\big)\Big)
                 - \kappa\sqrt{\bm{u}^\intercal\bm{\Sigma}\bm{u}}\Big)   \nonumber
  \end{eqnarray}
  \normalsize is strictly concave, where $\bm{R} \in \mathbb{R}^{d}$ is a random vector, $\kappa, \theta > 0$, and $\bm{\Sigma}$ is positive definite.
\end{thm}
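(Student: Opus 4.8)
The plan is to decompose $h(\bm{u}) = h_1(\bm{u}) + h_2(\bm{u})$, where $h_1(\bm{u}) = -\theta\log\mathbb{E}\big(\exp(-\tfrac{1}{\theta}\bm{u}^\intercal\bm{R})\big)$ and $h_2(\bm{u}) = -\kappa\sqrt{\bm{u}^\intercal\bm{\Sigma}\bm{u}}$, and to show that $h_1$ is concave on $U$ while $h_2$ is \emph{strictly} concave on $U$. Since the sum of a concave and a strictly concave function is strictly concave, this yields the claim.

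For $h_1$, I would recall that the cumulant generating function $g(\bm{t}) = \log\mathbb{E}(\exp(\bm{t}^\intercal\bm{R}))$ is convex on its (convex) effective domain. This follows from H\"older's inequality: for $\lambda\in(0,1)$, writing $\exp((\lambda\bm{t}_1+(1-\lambda)\bm{t}_2)^\intercal\bm{R})$ as the product $(\exp(\bm{t}_1^\intercal\bm{R}))^\lambda(\exp(\bm{t}_2^\intercal\bm{R}))^{1-\lambda}$ and applying H\"older with conjugate exponents $1/\lambda$ and $1/(1-\lambda)$ gives $g(\lambda\bm{t}_1+(1-\lambda)\bm{t}_2)\le\lambda g(\bm{t}_1)+(1-\lambda)g(\bm{t}_2)$. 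Since $\bm{u}\mapsto -\bm{u}/\theta$ is affine and $\theta>0$, the map $\bm{u}\mapsto \theta\,g(-\bm{u}/\theta)$ is convex, so $h_1$ is concave.

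The main work is the strict concavity of $h_2$, and the subtle point is that on all of $\mathbb{R}^d$ the map $\bm{u}\mapsto\sqrt{\bm{u}^\intercal\bm{\Sigma}\bm{u}}$ is only convex, not strictly so, because it is positively homogeneous and hence affine along rays through the origin. My plan is to use the constraint $\bm{1}^\intercal\bm{u}=1$ to rule out exactly this degeneracy. Since $\bm{\Sigma}$ is positive definite, $\|\bm{u}\|_{\bm{\Sigma}} := \sqrt{\bm{u}^\intercal\bm{\Sigma}\bm{u}}$ is an inner-product norm, so $\|\lambda\bm{u}_1+(1-\lambda)\bm{u}_2\|_{\bm{\Sigma}} \le \lambda\|\bm{u}_1\|_{\bm{\Sigma}}+(1-\lambda)\|\bm{u}_2\|_{\bm{\Sigma}}$ holds with equality if and only if $\bm{u}_1$ and $\bm{u}_2$ are nonnegatively proportional. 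But if $\bm{u}_2=c\bm{u}_1$ with $c\ge 0$ and both lie in $U$, then $1=\bm{1}^\intercal\bm{u}_2=c\,\bm{1}^\intercal\bm{u}_1=c$, forcing $c=1$ and $\bm{u}_1=\bm{u}_2$. Hence for distinct $\bm{u}_1,\bm{u}_2\in U$ the inequality is strict, so $\|\cdot\|_{\bm{\Sigma}}$ is strictly convex on $U$ and $h_2$ is strictly concave on $U$. Combining, $h=h_1+h_2$ is strictly concave on $U$, as required.

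As an alternative route for $h_2$, useful if one prefers a differentiable argument (the origin is not in $U$), I would instead compute $\nabla^2\|\bm{u}\|_{\bm{\Sigma}} = \|\bm{u}\|_{\bm{\Sigma}}^{-1}\big(\bm{\Sigma}-\bm{\Sigma}\bm{u}\bm{u}^\intercal\bm{\Sigma}/(\bm{u}^\intercal\bm{\Sigma}\bm{u})\big)$, which is positive semidefinite with null space spanned by $\bm{u}$ (by Cauchy--Schwarz in the $\bm{\Sigma}$-inner product). Since every nonzero tangent direction $\bm{v}$ to $U$ satisfies $\bm{1}^\intercal\bm{v}=0$ and therefore cannot be proportional to $\bm{u}$ (for which $\bm{1}^\intercal\bm{u}=1$), the Hessian is positive definite on the tangent space of $U$, giving strict convexity there and hence the same conclusion.
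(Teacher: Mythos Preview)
Your proposal is correct and follows the same overall strategy as the paper: split $h=h_1+h_2$, use H\"older's inequality to show $h_1$ is concave, and show $h_2=-\kappa\sqrt{\bm{u}^\intercal\bm{\Sigma}\bm{u}}$ is strictly concave on $U$. The only difference is that the paper outsources the strict concavity of $h_2$ to \cite[p.~4430]{Owa12}, whereas you supply a self-contained argument (via the equality case of the triangle inequality for the $\bm{\Sigma}$-norm and the observation that the constraint $\bm{1}^\intercal\bm{u}=1$ rules out nonnegative proportionality of distinct points in $U$); your Hessian alternative is also valid and makes the role of the constraint equally transparent.
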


\begin{proof}
 The proof that $-\kappa\sqrt{\bm{u}^\intercal\bm{\Sigma}\bm{u}}$
 is strictly concave follows from \cite[p. 4430]{Owa12}. Thus, it is sufficient to prove that the other part is concave. The latter follows from H\"{o}lder's inequality. Indeed, for
 $\bm{u}, \bm{v} \in \mathbb{R}^{d}$, and $t \in (0,1)$, we see that
 \scriptsize
 \begin{eqnarray}
   &        &   \mathbb{E}\Big(\exp\big(-(t\bm{u} + (1-t)\bm{v})^\intercal\bm{R}\frac{1}{\theta}\big)\Big)      \nonumber   \\
   &   =    &   \mathbb{E}\Big(\exp\big(-t\bm{u}^\intercal\bm{R}\frac{1}{\theta}\big)\exp\big(-(1-t)\bm{v}^\intercal\bm{R}\frac{1}{\theta}\big)\Big)        \nonumber \\
   &  \leq  &   \Bigg(\mathbb{E}\Big(\exp\big(-\bm{u}^\intercal\bm{R}\frac{1}{\theta}\big)\Big)\Bigg)^{t}
                \Bigg(\mathbb{E}\Big(\exp\big(-\bm{v}^\intercal\bm{R}\frac{1}{\theta}\big)\Big)\Bigg)^{1-t}. \nonumber
 \end{eqnarray}
 \normalsize
 By taking logarithms of both sides and by multiplying by $-\theta ,$ we obtain the desired result.
\end{proof}

\section{Proof of Proposition 1.}\label{appC}
In order to prove Proposition \ref{props:skewnormalkl}, we start with some basics from \cite{AreG05,ConA12}.


Given $\bm{Y} \sim SN_{d}(\bm{\mu},\bm{\Sigma},\bm{\xi})$, the density of $\bm{Y}$ is given by
\begin{eqnarray}\label{eqn:skewnormalpdf}
  f_{\bm{Y}}(\bm{y})  &  =  &  2|\bm{\Sigma}^{-\frac{1}{2}}|\phi_{d}\Big(\bm{\Sigma}^{-\frac{1}{2}}(\bm{y}-\bm{\mu})\Big)\Phi\Big(\bm{\xi}^{\intercal}\bm{\Sigma}^{-\frac{1}{2}}(\bm{y}-\bm{\mu})\Big|1-\bm{\xi}^{\intercal}\bm{\xi}\Big),
\end{eqnarray}
where $\phi_{d}$ is the density of a $d$-dimensional standard normal, and $\Phi(\cdot|\sigma^{2})$ is the cumulative distribution function of a standard normal with mean $0$ and variance $\sigma^{2}$. \\

Moreover, we have
\begin{eqnarray}\label{eqn:skewnormalmstd}
  \mathbb{E}\big(\bm{Y}\big) = \bm{\mu} + \sqrt{\frac{2}{\pi}}\bm{\Sigma}^{\frac{1}{2}}\bm{\xi} \ \ \ \textnormal{and} \ \ \
  Var\big(\bm{Y}\big) = \bm{\Sigma} - \frac{2}{\pi}\bm{\Sigma}^{\frac{1}{2}}\bm{\xi}\bm{\xi}^{\intercal}\bm{\Sigma}^{\frac{1}{2}}.
\end{eqnarray}
and
\begin{eqnarray}
   \bm{Y}   &           =         &   \bm{\mu} +  \bm{\Sigma}^{\frac{1}{2}} \bm{Y}^{\ast}  \stackrel{d}{=}  \bm{\mu} +  \bm{\Sigma}^{\frac{1}{2}} \Big(\bm{\xi}|Z_{0}| + (I_{d} - \bm{\xi}\bm{\xi}^{\intercal})^{\frac{1}{2}}\bm{Z}\Big),    \nonumber
\end{eqnarray}
where $Z_{0} \sim N(0,1)$ and $\bm{Z} \sim N_{d}(\bm{0},\bm{I}_{d})$ are independent one-dimensional Normal and $d$-dimensional Multivariate Normal distributions. \\

If the nominal model is $\bm{Y} \sim SN_{d}(\bm{\mu},\bm{\Sigma},\bm{\xi})$ and the alternative model is $\bar{\bm{Y}} \sim SN_{d}(\bar{\bm{\mu}},\bar{\bm{\Sigma}},\bar{\bm{\xi}})$, the KL divergence between the two models is given by
\begin{eqnarray}
   \mathcal{R}_{skew}(\mathcal{E})   &  =  &    C(\bm{Y}, \bar{\bm{Y}}) - C(\bar{\bm{Y}}, \bar{\bm{Y}}), \nonumber
\end{eqnarray}
where
\begin{eqnarray}\label{eqn:crossentropy}
  C(\bm{Y}, \bar{\bm{Y}}) &  =  & - \mathbb{E}\Big(\log\big(f_{Y}(\bar{\bm{Y}})\big)\Big)
\end{eqnarray}
is the cross-entropy (see \cite[p. 14]{ConA12}). \\

Next, we proceed to the  proof of Proposition \ref{props:skewnormalkl}.

\begin{proof}
  Since $\bm{Y} \sim SN_{d}(\bm{\mu},\bm{\Sigma},\bm{\xi})$, by (\ref{eqn:skewnormalpdf}), it is easy to see that
  \begin{eqnarray}
    \log(f_{Y}(\bm{y}))  &  =  &    \log\Big(|\bm{\Sigma}|^{-\frac{1}{2}}\phi_{d}\big(\bm{\Sigma}^{-\frac{1}{2}}(\bm{y}-\bm{\mu})\big)\Big) \nonumber \\
                         &     &  + \log\Big(2\Phi\big(\bm{\xi}^{T}\bm{\Sigma}^{-\frac{1}{2}}(\bm{y}-\bm{\mu}) \big| 1 - \bm{\xi}^{T}\bm{\xi}\big)\Big) \nonumber \\
                         &  =  &  - \frac{1}{2}\Big(\log(|\bm{\Sigma}|^{-1}) + d\log(2\pi) +(\bm{y} - \bm{\mu})^{\intercal}\bm{\Sigma}^{-1}(\bm{y} - \bm{\mu})\Big)    \nonumber \\
                         &     &  + \log\Big(2\Phi\big(\bm{\xi}^{T}\bm{\Sigma}^{-\frac{1}{2}}(\bm{y}-\bm{\mu}) \big| 1 - \bm{\xi}^{T}\bm{\xi}\big)\Big) \nonumber
  \end{eqnarray}
  The cross-entropy between $\bm{Y} \sim SN_{d}(\bm{\mu},\bm{\Sigma},\bm{\xi})$ and $\bar{\bm{Y}} \sim SN_{d}(\bar{\bm{\mu}},\bar{\bm{\Sigma}},\bar{\bm{\xi}})$ is then given by
  \begin{eqnarray}
    C(\bm{Y}, \bar{\bm{Y}}) &  =  & - \mathbb{E}\Big(\log\big(f_{Y}(\bar{\bm{Y}})\big)\Big)             \nonumber \\
                            &  =  &   \frac{1}{2}\Bigg(\log(|\bm{\Sigma}|^{-1}) + d\log(2\pi) + \mathbb{E}\Big((\bar{\bm{Y}} - \bm{\mu})^{\intercal}\bm{\Sigma}^{-1}(\bar{\bm{Y}}- \bm{\mu})\Big)\Bigg)  \nonumber
  \end{eqnarray}
  \begin{eqnarray}
                            &     & - \mathbb{E}\Bigg(\log\Big(2\Phi\big(\bm{\xi}^{T}\bm{\Sigma}^{-\frac{1}{2}}(\bar{\bm{Y}}-\bm{\mu}) \big| 1 - \bm{\xi}^{T}\bm{\xi}\big)\Big)\Bigg). \nonumber
  \end{eqnarray}
  As a consequence of \cite[part (iii) of Lemma 1]{ConA12} and (\ref{eqn:skewnormalmstd}), we obtain
  \begin{eqnarray}
                \mathbb{E}\Big((\bar{\bm{Y}} - \bm{\mu})^{\intercal}\bm{\Sigma}^{-1}(\bar{\bm{Y}} - \bm{\mu})\Big)\Bigg)
    &  =  &   tr\Big(\bm{\Sigma}^{-1}\bar{\bm{\Sigma}}\Big) + (\bar{\bm{\mu}} - \bm{\mu})^{\intercal}\bm{\Sigma}^{-1}(\bar{\bm{\mu}}-\bm{\mu})       \nonumber  \\
    &     & + 2\sqrt{\frac{2}{\pi}}(\bar{\bm{\mu}}-\bm{\mu})^{\intercal}\bm{\Sigma}^{-1}\bar{\bm{\Sigma}}^{\frac{1}{2}}\bar{\bm{\xi}},               \nonumber
  \end{eqnarray}
  which implies
  \begin{eqnarray}
    &     &    C(\bm{Y}, \bar{\bm{Y}})                                                                                                                                                                                \nonumber \\
    &  =  &    \frac{1}{2}\Bigg(\log(|\bm{\Sigma}|^{-1}) + d\log(2\pi) +  tr\Big(\bm{\Sigma}^{-1}\bar{\bm{\Sigma}}^{-1}\Big) + (\bar{\bm{\mu}} - \bm{\mu})^{\intercal}\bm{\Sigma}^{-1}(\bar{\bm{\mu}}-\bm{\mu})\Bigg) \nonumber \\
    &     &  + 2\sqrt{\frac{2}{\pi}}(\bar{\bm{\mu}}-\bm{\mu})^{\intercal}\bm{\Sigma}^{-1}\bar{\bm{\Sigma}}^{\frac{1}{2}}\bar{\bm{\xi}}
             - \mathbb{E}\Bigg(\log\Big(2\Phi\big(\bm{\xi}^{T}\bm{\Sigma}^{-\frac{1}{2}}(\bar{\bm{Y}}-\bm{\mu}) \big| 1 - \bm{\xi}^{T}\bm{\xi}\big)\Big)\Bigg). \nonumber
  \end{eqnarray}

  Since $\bar{\bm{Y}} \sim SN(\bar{\bm{\mu}},\bar{\bm{\Sigma}},\bar{\bm{\xi}})$, this yields
  \begin{eqnarray}
    \Xi_{2}     &        =          &  \bm{\xi}^{T}\bm{\Sigma}^{-\frac{1}{2}}(\bar{\bm{Y}} - \bm{\mu})                                                                                                           \nonumber \\
                &        =          &  \bm{\xi}^{T}\bm{\Sigma}^{-\frac{1}{2}}(\bar{\bm{\mu}} - \bm{\mu}) + \bm{\xi}^{T}\bm{\Sigma}^{-\frac{1}{2}}\bar{\bm{\Sigma}}^{\frac{1}{2}}\bar{\bm{Y}}^{\ast}              \nonumber  \\
                &  \stackrel{d}{=}  &  \bm{\xi}^{T}\bm{\Sigma}^{-\frac{1}{2}}(\bar{\bm{\mu}} - \bm{\mu}) + \bm{\xi}^{T}\bm{\Sigma}^{-\frac{1}{2}}\bar{\bm{\Sigma}}^{\frac{1}{2}}\bar{\bm{\xi}}|Z_{0}|            \nonumber \\
                &                   & + \sqrt{\bm{\xi}^{\intercal}\bm{\Sigma}^{-\frac{1}{2}}\bar{\bm{\Sigma}}\bm{\Sigma}^{-\frac{1}{2}}\bm{\xi} - \big(\bm{\xi}^{\intercal}\bm{\Sigma}^{-\frac{1}{2}}\bar{\bm{\Sigma}}^{\frac{1}{2}}\bar{\bm{\xi}}\big)^{2}}Z_{1} \nonumber \\
                &        \sim       &  SN\Bigg(\bm{\xi}^{T}\bm{\Sigma}^{-\frac{1}{2}}(\bar{\bm{\mu}} - \bm{\mu}), \bm{\xi}^{T}\bm{\Sigma}^{-\frac{1}{2}}\bar{\bm{\Sigma}}\bm{\Sigma}^{-\frac{1}{2}}\bm{\xi}, \frac{\bm{\xi}^{T}\bm{\Sigma}^{-\frac{1}{2}}\bar{\bm{\Sigma}}^{\frac{1}{2}}\bar{\bm{\xi}}}{\sqrt{\bm{\xi}^{T}\bm{\Sigma}^{-\frac{1}{2}}\bar{\bm{\Sigma}}\bm{\Sigma}^{-\frac{1}{2}}\bm{\xi}}}\Bigg),  \nonumber \\
                &                   &  \textnormal{if} \ \bm{\xi} \ \textnormal{is not} \ \bm{0}, \nonumber
  \end{eqnarray}
 and $\Xi_{2} = 0$ otherwise, where $Z_{1} \sim N(0,1)$ is independent of $Z_{0}$. This then implies
 \begin{eqnarray}
    C(\bm{Y}, \bar{\bm{Y}}) &  =  &    C(\bm{Y}_{0}, \bar{\bm{Y}}_{0})
                                     + 2\sqrt{\frac{2}{\pi}}(\bar{\bm{\mu}}-\bm{\mu})^{\intercal}\bm{\Sigma}^{-1}\bar{\bm{\Sigma}}^{\frac{1}{2}}\bar{\bm{\xi}}    \nonumber  \\
                            &     &  - \mathbb{E}\Bigg(\log\Big(2\Phi\big(\Xi_{2} \big| 1 - \bm{\xi}^{T}\bm{\xi}\big)\Big)\Bigg).                                 \nonumber
 \end{eqnarray}
 where $\bm{Y}_{0} \sim SN_{d}(\bm{\mu},\bm{\Sigma},0)$ and $\bar{\bm{Y}}_{0} \sim SN_{d}(\bar{\bm{\mu}},\bar{\bm{\Sigma}},0)$. \\

 Since
 \begin{eqnarray}
   \mathcal{R}_{skew}(\mathcal{E})    &  =  & C(\bm{Y}, \bar{\bm{Y}}) - C(\bar{\bm{Y}}, \bar{\bm{Y}}), \nonumber
 \end{eqnarray}
 after some simple algebra we obtain the desired result.
\end{proof}

\section*{Acknowledgments.} This work was supported by the Australian Research Council's Discovery Project funding scheme (Project DP160103489). The authors are grateful to the Editor and to the referees for the constructive criticisms.

\bibliographystyle{model5-names}
\biboptions{authoryear}

\end{document}